\newtheorem{theorem}{Theorem}{}
\newtheorem{remark}{Remark}{}
\newcommand\BibTeX{{\rmfamily B\kern-.05em \textsc{i\kern-.025em b}\kern-.08em
T\kern-.1667em\lower.7ex\hbox{E}\kern-.125emX}}
\begin{document}

\runninghead{Tasoujian et al.}

\title{Robust linear parameter-varying output-feedback  control of permanent magnet synchronous motors}

\author{Shahin Tasoujian, Jaecheol Lee, Karolos Grigoriadis, and Matthew Franchek}

\affiliation{This paper is a preprint of a paper submitted to Transactions of the Institute of Measurement and Control. \\ Department of Mechanical Engineering at the University of Houston, Houston, TX 77004 U.S.A.}
% \affilnum{2}SAGE Publications Ltd, UK}

\corrauth{Shahin Tasoujian
Department of Mechanical Engineering, University of Houston, Houston, TX 77004 U.S.A.}

\email{stasoujian@uh.edu}

\begin{abstract}
This paper investigates the design of a robust output-feedback linear parameter-varying (LPV) gain-scheduled controller for the speed regulation of a surface permanent magnet synchronous motor (SPMSM).  Motor dynamics is defined in the $\alpha - \beta$ stationary reference frame and a parameter-varying model formulation is provided to describe the SPMSM nonlinear dynamics. In this context, a robust gain-scheduled LPV output-feedback dynamic controller is designed to satisfy the asymptotic stability of the closed-loop system and meet desired performance requirements, as well as, guarantee robustness against system parameter perturbations and torque load disturbances. The real-time impact of temperature variation on the winding resistance and magnet flux during motor operations is considered in the LPV modelling and the subsequent control design to address demagnetization effects in the motor response. The controller synthesis conditions are formulated in a convex linear matrix inequality (LMI) optimization framework.  Finally, the validity of the proposed control strategy is assessed in simulation studies, and the results are compared to the results of the conventional field-oriented control (FOC) method. The closed-loop simulation studies demonstrate that the proposed LPV controller provides improved transient response with respect to settling time, overshoot, and disturbance rejection in tracking the velocity profile under the influence of parameter and temperature variations and load disturbances.
\end{abstract}

\keywords{Permanent magnet synchronous motors (PMSMs), Linear parameter-varying (LPV) systems, Disturbance rejection, Velocity tracking, $\mathcal{H}_{\infty}$ control, Uncertainty and robust control, Gain-scheduled control}

\maketitle

\section{Introduction}
\label{introduction}
Permanent magnet synchronous motors (PMSMs) are prevalent in industry and in various electromechanical applications, such as, electrical appliances, robotic systems and electric vehicles, due to their compact structure, high torque density, high power density and high-efficiency \cite{boldea1992vector, zhong1997analysis, yanliang2001development}. However, because of the inherent nonlinear dynamics, strong coupling effects and significant system parameter variability \cite{pillay1988modeling, cai2017optimal}, the precise speed and position control of a PMSM is a challenging task. Traditionally, the field-oriented control (FOC) method has been employed as a vector control of both magnitude and angle of the flux enabling independent control of torque and speed. Consequently, fast and high precision motor control can be achieved. For this reason, the motor drives implemented with the FOC method are typically comprised of two loops in a cascade manner in the $d-q$ rotating reference frame \cite{zhu2019performance}. The current control loop is the inner loop for the stator current to follow its reference value while the speed control loop is the outer loop taking into account speed error signals and providing reference signals to the inner loop \cite{giri2013ac}.

In the FOC method, a proportional-integral (PI) controller is typically implemented for both current and speed control due to its simple design structure. The PI controller gains are typically determined through nominal motor parameters to satisfy motor performance specifications \cite{kim2016self}. However, PI control is not suitable for applications where high performance and high precision is required. When motor parameter variations and disturbances are present, robustness and stability issues inevitably arise. As an additional challenge, varying motor temperature has shown to have a significant impact on PMSM speed, current and torque resulting from the reversible demagnitization of the permanent magnet (NdFeB or SmCo) and the temperature-dependence of the stator winding resistance. Hence, traditional PI controllers typically fail to maintain the desired closed-loop motor response in high performance applications. Various robust and nonlinear control methods have been adopted to address the parameter variability, as well as, to cope with the nonlinearity in the PMSM model \cite{zhao2019robust}. The sliding mode control (SMC) method has been proposed to assure fast response and robustness in the presence of nonlinearity in the model. However, the SMC method inherently causes a chattering problem due to the signum function, which leads to deteriorating performance at steady-state \cite{baik1998robust, kim2010high, zhang2012nonlinear}. Disturbance observers (DOBs) have been proposed to estimate the disturbance for its compensation. \cite{zhao2015adaptive} studied the case with unknown load torque and model parameters and proposed an adaptive observer-based control method for the speed tracking in the PMSMs. Although DOBs can help improve the capability of a motor to reject disturbances, a disadvantage lies in the fact that the methodology is required to have full knowledge of the PMSM parameters to ensure the stability of the DOBs \cite{solsona2000nonlinear, chang2010robust}. Additionally, the fuzzy logic control method has been proposed for the control of PMSMs and has shown an improved performance regarding robustness to disturbance rejection. However, shortcomings reside in the fact that membership functions rely solely on the designer's experience and it demands heavy computations \cite{yu2007fuzzy, chaoui2011adaptive}. 

Recently, the use of linear parameter-varying (LPV) gain-scheduling control techniques for the PMSM control problem has drawn the attention of researchers due to the controller's scheduling nature providing the ability to handle system parameter variations and nonlinearities in a systematic framework. In this regard, a static fixed-gain state-feedback LPV controller with an estimator has been proposed for PMSM control \cite{lee2017lpv}, where the estimator is utilized to provide the state-feedback controller with the full-state information needed to generate the control input. The authors in \cite{lee2017lpv} used the polytopic LPV description resulting in a relatively conservative control design, especially for the case of slow parameter variations. The gain-scheduling control technique is an extension of the linear control design to handle nonlinear and time variations, where a scheduling parameter vector captures the information about the nonlinearities or time-varying behavior of the system. The LPV gain-scheduling control methodology was first introduced in \cite{shamma1991guaranteed} to overcome the shortcoming of conventional gain-scheduling control techniques, namely, lack of closed-loop stability and performance guarantees.  Unlike conventional gain-scheduling design methods which are based on interpolation between several independently designed LTI controllers for different fixed operating points, LPV gain-scheduling control design provides a direct, efficient, systematic and global control approach, which also guarantees closed-loop stability and performance. Stability analysis and control synthesis of LPV systems have been addressed extensively in the control literature in the past decade  \cite{apkarian1998advanced, wu2001lpv, tasoujian2019delay, tasoujian2019robust, tasoujian2020robust}. 

In the present paper, first, the $\alpha-\beta$ stationary reference framework is considered for the surface permanent magnet synchronous motors (SPMSMs) modeling.  The SPMSM model is assumed to be subject to varying parameters and torque load disturbances that impair the response of the closed-loop system to track a reference speed profile. Subsequently, we develop a LPV representation to describe the SPMSM dynamics. Resistance and magnetic fluxes in SPMSMs vary with temperature. To this end, temperature variation is taken into consideration in the LPV modeling as an LPV scheduling parameter. The presented formulation allows a systematic control design seeking to handle the temperature-dependent parameter variations and the model uncertainties in SPMSMs. To minimize the conservatism of the control design in meeting performance specifications, a parameter-dependent Lyapunov function approach is utilized to design an LPV gain-scheduled dynamic output-feedback controller to track the commanded reference speed profile and minimize the effect of disturbances and parameter variations over the entire operating envelope of the motor. The proposed dynamic LPV control design method guarantees asymptotic stability and robustness against disturbances and uncertainties in terms of the closed-loop system's induced $\mathcal{L}_2$-norm performance index. A linear matrix inequality (LMI) framework is adopted to formulate the proposed $\mathcal{H}_{\infty}$ control synthesis problem in a convex, computationally tractable setting, which can be solved efficiently using numerical optimization algorithms. Finally, the performance of the proposed method is evaluated and validated in a computer simulation environment and compared to the conventional FOC method with a fixed-gain PI controller.

The notation to be used in the paper is standard and as follows: $\mathbb{R}$ denotes the set of real numbers, and $\mathbb{R}^n$ and $\mathbb{R}^{k \times m}$ are used to denote the set of real vectors of dimension $n$ and the set of real $k \times m$ matrices, respectively. $M \succ \mathbf{0}$ shows the positive definiteness of the matrix $M$  and the transpose of a real matrix $M$ is shown as $M^{\text{T}}$ . Also, $\mathbb{S}^{n}$ denotes the set of real symmetric $n \times n$ matrix. In a symmetric matrix, terms denoted by asterisk, $\star$, will be induced by symmetry as shown below:
\begin{equation*}
    \left[\!\!\begin{array}{cc} 
         S\! +\! W +\! J +\! (\star) &  \!\!\!\star \\
         Q &\!\!\! R 
    \end{array}\!\!\right] :=\! \left[\!\!\begin{array}{cc}
         S \!+ W \!+ W^\text{T} \!+ J \!+ J^\text{T} & \!\! Q^\text{T} \\
         Q & \!\! R
    \end{array}\!\!\right]
\end{equation*}
\noindent where $S$ is symmetric. $\mathbf{He} [\mathbf{M}]$ is Hermitian operator defined as $\mathbf{He} [\mathbf{M}] \triangleq \mathbf{M} + \mathbf{M}^{\text{T}}$ and $\mathcal{C} (J,\: K)$ stands for the set of continuous functions mapping a set $J$ to a set $K$. 

The outline of the paper is as follows. Section \textcolor{red}{II} presents the mathematical modeling for the SPMSMs and the proposed LPV model formulation. In Section \textcolor{red}{III}, the output-feedback LPV gain-scheduling control technique is described considering scheduling parameters that capture the nonlinearity and temperature-dependent variability of the SPMSM model. Section \textcolor{red}{IV} outlines the closed-loop results and describes the performance evaluation of the proposed LPV controller in a computer simulation environment. Finally, Section \textcolor{red}{V} concludes the paper.

\section{SPMSM modeling}
\label{sec:modeling}
\subsection{SPMSM dynamics}
We consider a three-phase synchronous motor with permanent magnets where the magnetic coupling between the phases and the inductance variation due to magnetic saturation are assumed to be negligible. Additionally, the magnetic flux ganerated by the excitation is assumed to have an ideal sinusoidal density distribution. Consequently, the simplified dynamic model for SPMSMs can be expressed in the $\alpha - \beta$ stationary reference  frame as follows \cite {Hwang2014H2CB}:
\begin{align}
      \dot{\theta} & = \omega, \nonumber\\[0.0cm]
    \dot{\omega} & =   \dfrac{1}{J_{m}}\left(-B\omega-K_{t}\textrm{sin}(p\theta)i_{\alpha}+K_{t}\textrm{cos}(p\theta)i_{\beta}-\tau_{L}\right), \nonumber\\[0.0cm]
    \dot{i}_{\alpha}&=\dfrac{1}{L_{s}} \left( -R_{s} i_{\alpha} + p\lambda_{pm} \omega \textrm{sin}(p\theta) + v_{\alpha} \right), \nonumber\\[0.0cm]
    \dot{i}_{\beta}&=\dfrac{1}{L_{s}} \left(-R_{s}i_{\beta}-p\lambda_{pm} \omega\textrm{cos}(p\theta)+v_{\beta}\right),
    \label{PMSM_model}  
\end{align}
\noindent where $\theta$ stands for the mechanical rotor angular position [rad], $\omega$ is the mechanical rotor speed [rad/sec], $v_{\alpha}$, $v_{\beta}$ and $i_{\alpha}$, $i_{\beta}$ are the voltages [V] and currents [A] in the $\alpha - \beta$ stationary reference  frame. In this model, $L_{s}$ denotes the stator inductance [H], $R_{s}$ is the stator resistance [$\Omega$], $\lambda_{pm}$ is the magnetic flux of the motor [Wb], $J_m$ is the moment of inertia $[kg.m^2]$, $p$ denotes the number of magnet pole pairs, $B$ is the viscous friction coefficient [$\textrm{N}\cdot\textrm{m}\cdot\textrm{sec/rad}$],  $\tau_{L}$ is the load torque [$\textrm{N}\cdot\textrm{m}$], and  $K_{t}=\dfrac{3}{2} p \lambda_{pm}$ is the torque constant $[V\cdot\textrm{rad/sec}]$.   
To assess the closed-loop SPMSM performance, the following tracking errors are defined for the quantities of interest: 
\begin{align}
           e_{w}&=\omega^* - \omega, \nonumber\\
       \displaystyle e_{z}&=\int_{0}^{t} e_{\omega} dx,  \nonumber\\
         e_{\alpha}&=i_{\alpha}^*-i_{\alpha}, \nonumber\\
         e_{\beta}&=i_{\beta}^*-i_{\beta},
    \label{tracking_error}
\end{align}
\noindent where $\omega^*$ is the desired motor speed, and $i_{\alpha}^*$ and $i_{\beta}^*$ are the desired currents in the stationary reference $(
\alpha - \beta)$ frame, respectively. Additionally, $e_{z}$ represents the integral of speed error and $e_{\alpha}$ and $e_{\beta}$ are the current errors in the $\alpha - \beta$ stationary reference frame,  respectively. 
% \begin{align}
%             \dot{e}_{z} &= e_{\omega}, \nonumber\\[0.0cm]
%         \dot{e}_{\omega} &=\dfrac{1}{J_{m}} \left( J_{m}\dot{\omega}^*+B\omega+K_{t}\textrm{sin}(p\theta)i_{\alpha}-K_{t}\textrm{cos}(p\theta)i_{\beta}+\tau_{L} \right), \nonumber\\[0.0cm]
%          \dot{e}_{\alpha} &=\dfrac{1}{L_{s}} \left( L_{s}\dot{i}_{\alpha}^*+R_{s}i_{\alpha}-p\lambda_{pm}\omega\textrm{sin}(p\theta)- v_{\alpha} \right), \nonumber\\[0.0cm]
%          \dot{e}_{\beta} &=\dfrac{1}{L_{s}} \left( L_{s}\dot{i}_{\beta}^*+R_{s}i_{\alpha}+ p\lambda_{pm}\omega\textrm{cos}(p\theta)-v_{\beta} \right).
%     \label{error_dynamics}
% \end{align}
The desired torque, $\tau^*$, the desired currents, $i_{\alpha}^*$ and $i_{\beta}^*$, and the voltage inputs to the motor, $v_{\alpha}$ and $v_{\beta}$ are defined as follows
\begin{align}
        \tau^*&=J_{m}\dot{\omega}^*+B\omega^*, \nonumber\\[0.0cm]
         i_{\alpha}^*&=-\dfrac{\tau^*\textrm{sin}(p\theta)}{K_{t}}, \nonumber\\[0.0cm]
         i_{\beta}^*&=\dfrac{\tau^*\textrm{cos}(p\theta)}{K_{t}}, \nonumber\\[0.0cm]
         v_{\alpha}&=L_{s}\dot{i}_{\alpha}^*+R_{s}i_{\alpha}^*-p\lambda_{pm}\omega^*\textrm{sin}(p\theta)-u_{\alpha}, \nonumber\\[0.0cm]
         v_{\beta}&=L_{s}\dot{i}_{\beta}^*+R_{s}i_{\beta}^*+p\lambda_{pm}\omega^*\textrm{cos}(p\theta)-u_{\beta}, 
    \label{5}
\end{align}
    
\noindent where $u_{\alpha}$ and $u_{\beta}$ are the control inputs. Hence, the error dynamics can be obtained by combining (\ref{PMSM_model}), (\ref{tracking_error}), and (\ref{5}) as follows
% By substituting the defined values in (\ref{5}) into the error dynamics (\ref{error_dynamics}), the error dynamics can be rewritten as follows
\begin{align}
            \dot{e}_{z}&=e_{w}, \nonumber\\[0.0cm]
        \dot{e}_{\omega}&=\dfrac{1}{J_{m}}\left(-Be_{\omega}-K_{t}\textrm{sin}(p\theta)e_{\alpha}+K_{t}\textrm{cos}(p\theta)e_{\beta}+\tau_{L} \right), \nonumber\\[0.0cm]
        \dot{e}_{\alpha}&=\dfrac{1}{L_{s}} \left(-R_{s}e_{\alpha}+p\lambda_{pm}\textrm{sin}(p\theta)e_{\omega}+u_{\alpha} \right), \nonumber\\[0.0cm]
         \dot{e}_{\beta}&=\dfrac{1}{L_{s}}\left(-R_{s}e_{\beta}-p\lambda_{pm}\textrm{cos}(p\theta)e_{\omega}+u_{\beta}\right). 
    \label{final_error_dynamics}
\end{align}
Subsequently, an LPV representation for the introduced SPMSM error dynamics is developed to enable LPV control design: 
\subsection{LPV model formulation}
In order to be able to implement the proposed LPV control methodology to the SPMSM dynamics case study, we first rewrite the described system (\ref{final_error_dynamics}) as a proper LPV model. LPV systems correspond to a class of linear systems, whose dynamics depend on time-varying parameters, known as the scheduling parameters. Therefore, considering (\ref{final_error_dynamics}), the first two LPV scheduling parameters are defined as follows
\begin{align}
             \rho_1(\theta(t)) &= p\lambda_{pm}\textrm{sin}(p\theta(t)), \nonumber\\ 
          \rho_2(\theta(t)) &= p\lambda_{pm}\textrm{cos}(p\theta(t)).
    \label{eq:sch_par2}
\end{align}
\noindent Since the scheduling parameters in (\ref{eq:sch_par2}) are trigonometric functions, they can be bounded as follows
\begin{equation}
    \begin{array}{cc}
        -p\lambda_{pm}\leq\rho_{1}(\theta(t))\: \textrm{and} \: \rho_{2}(\theta(t)) \leq p\lambda_{pm}.   \end{array}
\end{equation}
It is known that temperature variation has a significant effect on SPMSM performance. Consequently, we define temperature as the third scheduling parameter:
\begin{equation}
    \begin{array}{cc}
        \underline{T} \leq \rho_{3}(t) = T(t) \leq \overline{T},  
    \end{array}
    \label{8}
\end{equation}
\noindent where $\underline{T}$, and $\overline{T}$ are the minimum and maximum motor operating temperatures in $^\circ C$, respectively. Resistance and magnetic fluxes vary considerably throughout the motor operation as a function of temperature. Embedded insulate temperature sensors or estimation algorithms can be used to provide instantaneous measurements or estimates of stator winding temperature \cite{jun2018temperature}. 
The following relations can be used to obtain empirical expression for these motor parameter variations as functions of temperature
 \begin{align}
             R_{s}(\rho_{3}(t))&=R_{s0} \left( \dfrac{235+\rho_{3}(t)}{310}    \right), \nonumber\\[0.2cm]
        \lambda_{pm}(\rho_{3}(t))&=\lambda_{pm0} \left( 1+\dfrac{\alpha(\rho_{3}(t)-30)}{100}   \right),
     \label{9}
 \end{align}
 \noindent where $R_{s0}$ is the resistance value of the winding at 75$^\circ C$, $\lambda_{pm0}$ is the flux of the magnet at 30$^\circ C$, and $\alpha$ is the temperature coefficient of the magnet in $\%/^\circ C$ \cite{sul2011control}. 
After defining the scheduling parameters, the scheduling parameter vector is represented as, $\boldsymbol{\rho}(t)=[\begin{array}{ccc}
\rho_1(t) & \rho_2(t) & \rho_3(t)\end{array} ]^{\text{T}}$. Subsequently, the LPV representation of the SPMSM dynamics takes the following matrix-vector form
 \begin{align}
               \dot{\mathbf{e}}(t) &= \mathbf{A}(\boldsymbol{\rho}(t))\mathbf{e}(t) +\mathbf{B}_{1}\tau_{L}(t) + \mathbf{B}_{2}\mathbf{u}(t), \nonumber\\
          \mathbf{y}(t) &= \mathbf{C}\: \mathbf{e}(t),
     \label{eq:lpverrordynamics}
 \end{align}
\noindent where the augmented state vector is defined as $\mathbf{e}(t) = \left[\,e_{z}(t)\quad e_{\omega}(t)\quad e_{\alpha}(t)\quad e_{\beta}(t) \right]^{\text{T}}$, the control input is $\mathbf{u}(t)=\left[\, u_{\alpha}(t)\quad u_{\beta}(t) \, \right]^{\text{T}}$, \: $\mathbf{y}(t)$ is the measured signal vector, and the state-space matrices of the LPV system (\ref{eq:lpverrordynamics}) are as follows

\begin{align}
\mathbf{A}(\boldsymbol{\rho}(t)) = & \begin{bmatrix}
       \quad 0 & 1 & 0 & 0 \\[0.1cm]
       \quad 0 & -\dfrac{B}{J_{m}} & -\dfrac{3}{2}\dfrac{\rho_{1}(t)}{J_{m}} & \dfrac{3}{2}\dfrac{\rho_{2}(t)}{J_{m}} \\[0.3cm]
       \quad 0& \dfrac{\rho_{1}(t)}{L_{s}}& -\dfrac{R_{s}(\rho_{3}(t))}{L_{s}}& 0 \\[0.3cm]
       \quad 0& -\dfrac{\rho_{2}(t)}{L_{s}} & 0  & -\dfrac{R_{s}(\rho_{3}(t))}{L_{s}}
     \end{bmatrix},\nonumber\\
\mathbf{B}_{1}\! =\! \begin{bmatrix}
\, 0 \, \\
\, 1 \, \\
\, 0 \,\\
\, 0 \,
\end{bmatrix}\!&, 
\mathbf{B}_{2} = \begin{bmatrix}
\, 0& 0 \, \\
\, 0 & 0 \, \\
\, \dfrac{1}{L_{s}} & 0 \, \\
\, 0 & \dfrac{1}{L_{s}} \,
\end{bmatrix}\!,
\, \mathbf{C} = \begin{bmatrix}
\, 1&0&0&0 \, \\
\, 0&1&0&0 \,
\end{bmatrix}\!.     
\label{eq:matrices1}
\end{align}
Next, the proposed output-feedback LPV gain-scheduling control design method is described.
\section{LPV control design}
\label{sec:control}
We aim to design an output-feedback LPV gain-scheduled controller for the SPMSM model (\ref{eq:lpverrordynamics}) in the context of induced $\mathcal{L}_2$-norm performance specifications. To this end, we consider a generic LPV open-loop system with the following state-space realization  
\begin{align}
    \dot{\mathbf{x}}(t) & =  \mathbf{A}(\boldsymbol{\rho}(t)) \mathbf{x}(t)+ \mathbf{B}_1(\boldsymbol{\rho}(t))\mathbf{w}(t)+ \mathbf{B}_2(\boldsymbol{\rho}(t))\mathbf{u}(t), \nonumber\\[0.10cm] 
  \mathbf{z}(t) & = \mathbf{C}_1(\boldsymbol{\rho}(t)) \mathbf{x}(t) + \mathbf{D}_{11}(\boldsymbol{\rho}(t)) \mathbf{w}(t) + \mathbf{D}_{12}(\boldsymbol{\rho}(t)) \mathbf{u}(t),\nonumber\\[0.10cm] 
 \mathbf{y}(t)& = \mathbf{C}_2(\boldsymbol{\rho}(t)) \mathbf{x}(t)+  \mathbf{D}_{21}(\boldsymbol{\rho}(t)) \mathbf{w}(t),\nonumber\\[0.10cm] 
 \mathbf{x}(0) & = \mathbf{x}_0,
\label{LPVsystem} 
\end{align}
\noindent where $\mathbf{x} \in \mathbb{R}^n$ is the system state vector, $\mathbf{w} \in \mathbb{R}^{n_w}$ is the vector of exogenous disturbances with finite energy in the space $\mathcal{L}_2[0, \:\: \infty]$, $\mathbf{u} \in \mathbb{R}^{n_u}$ is the control input vector, $\mathbf{z}(t) \in \mathbb{R}^{n_z}$ is the vector of controlled output, $\mathbf{y}(t) \in \mathbb{R}^{n_y}$ is the vector of measured output, $\mathbf{x}_0 \in \mathbb{R}^n$ is the initial system condition. The state space matrices $\mathbf{A}(\cdot)$, $\mathbf{B}_1(\cdot)$, $\mathbf{B}_2(\cdot)$, $\mathbf{C}_1(\cdot)$, $\mathbf{C}_2(\cdot)$, $\mathbf{D}_{11}(\cdot)$, $\mathbf{D}_{12}(\cdot)$, and $\mathbf{D}_{21}(\cdot)$ have rational dependence on the time-varying scheduling parameter vector, $\boldsymbol{\rho}(\cdot) \in \mathscr{F}^\nu _\mathscr{P}$, which is also measurable in real-time. $\mathscr{F}^\nu _\mathscr{P}$ is the set of allowable parameter trajectories defined as
\begin{align}
 \!\!\!\!\!\!\!\!\!\!\!\!\!\!\!\!\!\mathscr{F}^\nu _\mathscr{P} \triangleq \{\boldsymbol{\rho}(t) \in \mathcal{C}(\mathbb{R}_{+},\mathbb{R}^{n_s}):\boldsymbol{\rho}(t) \in \mathscr{P},\qquad\quad \nonumber\\ 
  |\dot{\rho}_i (t)| \leq \nu_i, i=1,2,\dots,n_s\},
 \label{eq:parametertraj}
\end{align}
wherein $n_s$ is the number of parameters and $\mathscr{P}$ is a compact subset of $\mathbb{R}^{s}$, $i.e.$ the parameter trajectories and parameter variation rates are assumed bounded as defined. 
The output-feedback LPV gain-scheduled control design procedure consists of finding a full-order dynamic LPV controller in the form of
\begin{align}
   \dot{\mathbf{x}}_K(t) & = \mathbf{A}_K (\boldsymbol{\rho}(t)) \mathbf{x}_K(t)+\mathbf{B}_K (\boldsymbol{\rho}(t))\mathbf{y}(t),\nonumber\\[0.1cm] 
\mathbf{u}(t) & = \mathbf{C}_K (\boldsymbol{\rho}(t)) \mathbf{x}_K(t)+\mathbf{D}_K (\boldsymbol{\rho}(t))\mathbf{y}(t),
\label{controller} 
\end{align}
\noindent where $\mathbf{x}_K (t) \in \mathbb{R}^{n}$ is the controller state vector. By substituting the controller (\ref{controller}) in the open-loop system (\ref{LPVsystem}), and assuming $\mathbf{x}_{cl}(t) = [\begin{array}{cc}
\mathbf{x}(t) &\mathbf{x}_K (t)\end{array} ]^{\text{T}}$, the interconnected closed-loop system ($\mathbf{T}_{\mathbf{z}\mathbf{w}}$) is obtained as follows
\begin{align}
  \dot{\mathbf{x}}_{cl}(t)& = \mathbf{A}_{cl} (\boldsymbol{\rho}(t)) {\mathbf{x}}_{cl}(t) + \mathbf{B}_{cl} (\boldsymbol{\rho}(t)) \mathbf{w}(t),\nonumber\\[0.1cm]  
\mathbf{z}(t) & =  \mathbf{C}_{cl} (\boldsymbol{\rho}(t)) {\mathbf{x}}_{cl}(t) + \mathbf{D}_{cl} (\boldsymbol{\rho}(t)) \mathbf{w}(t),
\label{eq:closed-loop system}  
\end{align}
\noindent with
\begin{align*}
		& \mathbf{A}_{cl}=\begin{bmatrix}
\mathbf{A} + \mathbf{B}_2 \mathbf{D}_K \mathbf{C}_2 & \mathbf{B}_2 \mathbf{C}_K\\ 
\mathbf{B}_K \mathbf{C}_2 & \mathbf{A}_K 
\end{bmatrix},\\
&   \mathbf{B}_{cl}=\begin{bmatrix}
\mathbf{B}_1 + \mathbf{B}_2 \mathbf{D}_K \mathbf{D}_{21} \\
\mathbf{B}_K \mathbf{D}_{21} 
\end{bmatrix}, \\
& \mathbf{C}_{cl}=\begin{bmatrix}
\mathbf{C}_1 + \mathbf{D}_{12} \mathbf{D}_K \mathbf{C}_2 & \mathbf{D}_{12} \mathbf{C}_K
\end{bmatrix}, \\
& \mathbf{D}_{cl}= \mathbf{D}_{11} + \mathbf{D}_{12} \mathbf{D}_K \mathbf{D}_{21}, 
\end{align*}
\noindent where the dependence on the scheduling parameter has been dropped for brevity.  The final designed controller should be able to meet the following objectives for the closed-loop system:
	\begin{itemize}
		\item Input-to-state stability (ISS) of the closed-loop system (\ref{eq:closed-loop system}) in the presence of parameter variations and disturbances, and
		\item Minimization of the worst-case amplification of the induced $\mathcal{L}_2$-norm of the mapping from the disturbances $\mathbf{w}(t)$ to the controlled output $\mathbf{z}(t)$, given by
			\begin{equation}
				\Vert \mathbf{T}_{\mathbf{z}\mathbf{w}}\Vert_{i,2} = \underset{\boldsymbol{\rho}(t) \in \mathscr{F}^\nu _\mathscr{P}}{\sup} \:\:\: \underset{\Vert \mathbf{w}(t) \Vert_2 \neq 0}{\sup}\:\: \frac{\Vert \mathbf{z}(t) \Vert_2}{\Vert \mathbf{w}(t) \Vert_2}.
				\label{eq:Performance Index}
			\end{equation}
	\end{itemize}
Accordingly, in this paper, we utilize an extended form of the Bounded Real Lemma \cite{briat2014linear} and a quadratic parameter-dependent Lyapunov functions of the form $V(\mathbf{x}_{cl}(t), \boldsymbol{\rho}(t)) = \mathbf{x}^{\text{T}}_{cl}(t) \mathbf{P}(\boldsymbol{\rho}(t)) \mathbf{x}_{cl}(t)$ to obtain less conservative results that are valid for arbitrary bounded parameter variation rates \cite{apkarian1998advanced}. To this end, considering the closed-loop system (\ref{eq:closed-loop system}), the following result provides sufficient conditions for the synthesis of a output-feedback LPV controller, which is formulated as convex optimization problems with LMI constraints. The designed LPV gain-scheduled controller guarantees closed-loop asymptotic parameter-dependent quadratic (PDQ) stability and a specified performance level as defined in (\ref{eq:Performance Index}).
\begin{theorem}\label{thm:thm1}\cite{briat2014linear} Considering the given open-loop LPV system (\ref{LPVsystem}), there exists a gain-scheduled dynamic full-order output-feedback controller of the form (\ref{controller}) that guarantees the closed-loop asymptotic stability and satisfies the induced $\mathcal{L}_2$-norm performance condition  $\Vert \mathbf{z}(t) \Vert_2 \leq \gamma \Vert \mathbf{w}(t) \Vert_2$, if there exist continuously differentiable parameter-dependent symmetric matrices $\mathbf{X}, \mathbf{Y}:\mathbb{R}^{s}\rightarrow\mathbb{S}^{n}$, parameter-dependent matrices $\widehat{A} \in \mathbb{R}^{s}\rightarrow \mathbb{R}^{n \times n}$, $\widehat{B} \in \mathbb{R}^{s}\rightarrow \mathbb{R}^{n \times n_y}$, $\widehat{C} \in \mathbb{R}^{s}\rightarrow \mathbb{R}^{n_u \times n}$, $\widehat{D} \in \mathbb{R}^{s}\rightarrow \mathbb{R}^{n_u \times n_y}$, and a scalar $\gamma > 0$ such that the following LMI conditions hold for all $\boldsymbol{\rho} \in \mathscr{F}^\nu _\mathscr{P}$.
\begin{equation}
\begin{array}{l}
\left[\begin{array}{cc}
\dot{\mathbf{X}} + \mathbf{X} \mathbf{A} + \widehat{B}\mathbf{C}_2 + (\star)   & \star \\
\widehat{A}^{\text{T}} + \mathbf{A} + \mathbf{B}_2 \widehat{D} \mathbf{C}_2 & -\dot{\mathbf{Y}} + \mathbf{A}\mathbf{Y} + \mathbf{B}_2 \widehat{C} + (\star) \\
(\mathbf{X} \mathbf{B}_1 + \widehat{B} \mathbf{D}_{21})^{\text{T}} & (\mathbf{B}_1 + \mathbf{B}_2 \widehat{D} \mathbf{D}_{21})^{\text{T}} \\
\mathbf{C}_1 + \mathbf{D}_{12} \widehat{D} \mathbf{C}_2 & \mathbf{C}_1 \mathbf{Y} + \mathbf{D}_{12} \widehat{C}
\end{array}\right.\\[8mm]
\quad \qquad\qquad\quad\quad\quad\qquad\left.\begin{array}{cc}
\star & \star \\
\star & \star \\
 -\gamma \mathbf{I}   & \star \\
\mathbf{D}_{11} + \mathbf{D}_{12} \widehat{D} \mathbf{D}_{21} &  -\gamma \mathbf{I}
\end{array}\right]\prec\mathbf{0},
\end{array}
\label{eq:LMI1}
\end{equation}
\begin{equation}
\left[\begin{array}{cc}
\mathbf{X} & \mathbf{I} \\
\mathbf{I} & \mathbf{Y}
\end{array}\right] \succ	 \mathbf{0}.
\label{eq:LMI2}
\end{equation}
\end{theorem}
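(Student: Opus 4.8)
\noindent\emph{Proof outline.} The plan is to apply the extended Bounded Real Lemma of \cite{briat2014linear} to the closed-loop realization (\ref{eq:closed-loop system}) and then render the resulting synthesis inequality convex through a congruence transformation combined with a change of controller variables. With the parameter-dependent Lyapunov function $V(\mathbf{x}_{cl},\boldsymbol{\rho}) = \mathbf{x}^{\text{T}}_{cl}\mathbf{P}(\boldsymbol{\rho})\mathbf{x}_{cl}$, $\mathbf{P}(\boldsymbol{\rho})\succ\mathbf{0}$, the Bounded Real Lemma certifies PDQ stability together with $\Vert\mathbf{z}\Vert_2\leq\gamma\Vert\mathbf{w}\Vert_2$ provided the analysis inequality
\begin{equation*}
\left[\begin{array}{ccc}
\mathbf{He}\!\left[\mathbf{P}\mathbf{A}_{cl}\right] + \dot{\mathbf{P}} & \mathbf{P}\mathbf{B}_{cl} & \mathbf{C}_{cl}^{\text{T}} \\
\mathbf{B}_{cl}^{\text{T}}\mathbf{P} & -\gamma\mathbf{I} & \mathbf{D}_{cl}^{\text{T}} \\
\mathbf{C}_{cl} & \mathbf{D}_{cl} & -\gamma\mathbf{I}
\end{array}\right]\prec\mathbf{0}
\end{equation*}
holds along every trajectory in $\mathscr{F}^\nu_\mathscr{P}$. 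The difficulty is that this inequality is bilinear: the unknown $\mathbf{P}(\boldsymbol{\rho})$ multiplies the controller data $(\mathbf{A}_K,\mathbf{B}_K,\mathbf{C}_K,\mathbf{D}_K)$ hidden inside $\mathbf{A}_{cl},\mathbf{B}_{cl},\mathbf{C}_{cl},\mathbf{D}_{cl}$, so it is not yet an LMI in the design unknowns.

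First I would partition $\mathbf{P}$ and $\mathbf{P}^{-1}$ conformably with $\mathbf{x}_{cl}=[\,\mathbf{x}^{\text{T}}\;\;\mathbf{x}_K^{\text{T}}\,]^{\text{T}}$, taking the $(1,1)$ blocks to be $\mathbf{X}$ and $\mathbf{Y}$ and the off-diagonal factors to be $\mathbf{N}$ and $\mathbf{M}$, so that $\mathbf{X}\mathbf{Y}+\mathbf{N}\mathbf{M}^{\text{T}}=\mathbf{I}$ follows from $\mathbf{P}\mathbf{P}^{-1}=\mathbf{I}$. Introducing the full-rank auxiliary matrices
\begin{equation*}
\boldsymbol{\Pi}_1=\begin{bmatrix}\mathbf{I}&\mathbf{Y}\\\mathbf{0}&\mathbf{M}^{\text{T}}\end{bmatrix},\qquad \boldsymbol{\Pi}_2=\mathbf{P}\boldsymbol{\Pi}_1=\begin{bmatrix}\mathbf{X}&\mathbf{I}\\\mathbf{N}^{\text{T}}&\mathbf{0}\end{bmatrix},
\end{equation*}
I would apply the congruence $\mathrm{diag}(\boldsymbol{\Pi}_1^{\text{T}},\mathbf{I},\mathbf{I})$ to the analysis inequality while simultaneously adopting the linearizing substitutions $\widehat{D}=\mathbf{D}_K$, $\widehat{C}=\mathbf{C}_K\mathbf{M}^{\text{T}}+\mathbf{D}_K\mathbf{C}_2\mathbf{Y}$, $\widehat{B}=\mathbf{N}\mathbf{B}_K+\mathbf{X}\mathbf{B}_2\mathbf{D}_K$, and an $\widehat{A}$ collecting the remaining products $\mathbf{N}\mathbf{A}_K\mathbf{M}^{\text{T}}$, $\mathbf{X}\mathbf{B}_2\mathbf{C}_K\mathbf{M}^{\text{T}}$, $\mathbf{N}\mathbf{B}_K\mathbf{C}_2\mathbf{Y}$ and $\mathbf{X}(\mathbf{A}+\mathbf{B}_2\mathbf{D}_K\mathbf{C}_2)\mathbf{Y}$. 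Using $\mathbf{P}\boldsymbol{\Pi}_1=\boldsymbol{\Pi}_2$, each bilinear product collapses into these four new variables, and a block-by-block identification shows that the transformed performance and output rows reproduce precisely the $(\mathbf{X}\mathbf{B}_1+\widehat{B}\mathbf{D}_{21})$, $(\mathbf{B}_1+\mathbf{B}_2\widehat{D}\mathbf{D}_{21})$, $(\mathbf{C}_1+\mathbf{D}_{12}\widehat{D}\mathbf{C}_2)$, $(\mathbf{C}_1\mathbf{Y}+\mathbf{D}_{12}\widehat{C})$ and $(\mathbf{D}_{11}+\mathbf{D}_{12}\widehat{D}\mathbf{D}_{21})$ entries of (\ref{eq:LMI1}).

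The step I expect to be the main obstacle is the parameter-dependent derivative term $\dot{\mathbf{P}}$, which is absent in the LTI problem and is exactly what produces the $+\dot{\mathbf{X}}$ and $-\dot{\mathbf{Y}}$ contributions in the two diagonal blocks of (\ref{eq:LMI1}). Here I would differentiate $\mathbf{P}\boldsymbol{\Pi}_1=\boldsymbol{\Pi}_2$ to obtain $\boldsymbol{\Pi}_1^{\text{T}}\dot{\mathbf{P}}\boldsymbol{\Pi}_1 = \boldsymbol{\Pi}_1^{\text{T}}\dot{\boldsymbol{\Pi}}_2-\boldsymbol{\Pi}_2^{\text{T}}\dot{\boldsymbol{\Pi}}_1$ and evaluate the right-hand side explicitly: the diagonal entries reduce to $\dot{\mathbf{X}}$ and $-\dot{\mathbf{Y}}$, whereas the surviving off-diagonal contribution $\mathbf{X}\dot{\mathbf{Y}}+\mathbf{N}\dot{\mathbf{M}}^{\text{T}}$ has to be absorbed into the definition of $\widehat{A}$ so that the $(2,1)$ block of the transformed inequality becomes exactly $\widehat{A}^{\text{T}}+\mathbf{A}+\mathbf{B}_2\widehat{D}\mathbf{C}_2$ as in (\ref{eq:LMI1}). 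Securing these cross-coupling cancellations, and tracking the associated continuous-differentiability requirements on $\mathbf{M}$ and $\mathbf{N}$, is the delicate part of the argument.

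Finally, applying the same congruence to $\mathbf{P}\succ\mathbf{0}$ yields $\boldsymbol{\Pi}_1^{\text{T}}\mathbf{P}\boldsymbol{\Pi}_1=\boldsymbol{\Pi}_1^{\text{T}}\boldsymbol{\Pi}_2=\left[\begin{smallmatrix}\mathbf{X}&\mathbf{I}\\\mathbf{I}&\mathbf{Y}\end{smallmatrix}\right]\succ\mathbf{0}$, which is precisely (\ref{eq:LMI2}). For the sufficiency (controller-construction) direction I would reverse the procedure: feasibility of (\ref{eq:LMI2}) forces $\mathbf{X}\succ\mathbf{0}$, $\mathbf{Y}\succ\mathbf{0}$ and nonsingularity of $\mathbf{I}-\mathbf{X}\mathbf{Y}$, so one can factor $\mathbf{I}-\mathbf{X}\mathbf{Y}=\mathbf{N}\mathbf{M}^{\text{T}}$ with square invertible $\mathbf{M},\mathbf{N}$ (chosen via a smooth singular-value factorization to preserve differentiability), reassemble a positive-definite $\mathbf{P}(\boldsymbol{\rho})$, and then solve the invertible change-of-variables relations in the order $\widehat{D}\!\to\!\mathbf{D}_K$, $\widehat{C}\!\to\!\mathbf{C}_K$, $\widehat{B}\!\to\!\mathbf{B}_K$, $\widehat{A}\!\to\!\mathbf{A}_K$ to recover a controller of the form (\ref{controller}) that renders (\ref{eq:closed-loop system}) PDQ-stable with guaranteed performance level $\gamma$.
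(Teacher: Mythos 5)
Your outline is correct and follows essentially the same route the paper relies on: the paper does not actually prove Theorem \ref{thm:thm1} but cites \cite{briat2014linear}, and the standard argument there is exactly your construction --- the extended Bounded Real Lemma applied to (\ref{eq:closed-loop system}), the congruence with $\boldsymbol{\Pi}_1$, $\boldsymbol{\Pi}_2 = \mathbf{P}\boldsymbol{\Pi}_1$, and the linearizing change of variables $(\widehat{A},\widehat{B},\widehat{C},\widehat{D})$, whose inversion is precisely the paper's own reconstruction steps $\mathbf{I}-\mathbf{X}\mathbf{Y}=\mathbf{N}\mathbf{M}^{\text{T}}$ and (\ref{eq:controllermatrices}). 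Your treatment of the parameter-derivative term is sound up to a sign convention (differentiating $\mathbf{X}\mathbf{Y}+\mathbf{N}\mathbf{M}^{\text{T}}=\mathbf{I}$ shows the cross term folded into $\widehat{A}$ is $\dot{\mathbf{X}}\mathbf{Y}+\dot{\mathbf{N}}\mathbf{M}^{\text{T}}=-(\mathbf{X}\dot{\mathbf{Y}}+\mathbf{N}\dot{\mathbf{M}}^{\text{T}})$), and is in fact more careful than the paper, whose formula for $\mathbf{A}_K$ in (\ref{eq:controllermatrices}) silently drops this derivative correction that your own bookkeeping shows belongs in $\widehat{A}$ when $\mathbf{X},\mathbf{Y},\mathbf{M},\mathbf{N}$ are parameter-dependent.
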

Subsequently, the LPV control design is expanded to guarantee robustness against modeling mismatch and parameter uncertainties. To this end, $\mathbf{A}$ and $\mathbf{B}_2$ in (\ref{LPVsystem}) are considered to be uncertain system matrices,  $\mathbf{A}_{\Delta}(\boldsymbol{\rho}(t)) = \mathbf{A}(\boldsymbol{\rho}(t))  + \boldsymbol{\Delta} \mathbf{A}(t)$, $\mathbf{B}_{2, \Delta}(\boldsymbol{\rho}(t)) = \mathbf{B}_2(\boldsymbol{\rho}(t))  + \boldsymbol{\Delta} \mathbf{B}_2(t)$, where $\boldsymbol{\Delta} \mathbf{A}(t)$ and $\boldsymbol{\Delta} \mathbf{B}_2(t)$  are bounded matrices containing parametric uncertainties. The norm-bounded uncertainties are assumed to satisfy the following relation
\begin{equation}
\left[\begin{array}{c}
\boldsymbol{\Delta} \mathbf{A}(t)\\
\boldsymbol{\Delta} \mathbf{B}_2(t) 
\end{array} 
\right] = \mathbf{H} \boldsymbol{\Delta} (t) \left[ \begin{array}{c}
\mathbf{E}_1\\
\mathbf{E}_2\end{array} \right],
\label{uncertainties}
 \end{equation} 
\noindent where $\mathbf{H} \in \mathbb{R}^{n \times i} $, $\mathbf{E}_1 \in \mathbb{R}^{j \times n} $, $\mathbf{E}_2 \in \mathbb{R}^{j \times n_u} $ are known constant matrices and $\boldsymbol{\Delta} (t) \in \mathbb{R}^{i \times j}$ is an unknown  time-varying uncertainty matrix function satisfying inequality
\begin{equation}
 \boldsymbol{\Delta}^T (t) \boldsymbol{\Delta} (t)  \preceq \mathbf{I}.
\label{Delta1}
\end{equation}
By substituting $\mathbf{A}_{\Delta}(\boldsymbol{\rho}(t))$ and $\mathbf{B}_{2, \Delta}(\boldsymbol{\rho}(t))$ for $\mathbf{A}$ and $\mathbf{B}_2$ in (\ref{eq:LMI1}), the following result presents a condition for ensuring closed-loop stability and performance in the presence of norm-bounded uncertainties via an LPV control design of the form (\ref{controller}).
\begin{theorem}\label{thm:thm2} There exists a full-order robust output-feedback LPV controller of the form (\ref{controller}), over the sets $\mathscr{F}^\nu _\mathscr{P}$  with all admissible uncertainties $\boldsymbol{\Delta} \mathbf{A}(t)$ and $\boldsymbol{\Delta} \mathbf{B}_2(t)$ of the form (\ref{uncertainties}) and all $\boldsymbol{\Delta}(t)$ satisfying (\ref{Delta1}), that guarantees the closed-loop asymptotic stability and satisfies the induced $\mathcal{L}_2$-norm performance condition  $\Vert \mathbf{z}(t) \Vert_2 \leq \gamma \Vert \mathbf{w}(t) \Vert_2$, if there exist continuously differentiable parameter dependent symmetric matrices $\mathbf{X}, \mathbf{Y}:\mathbb{R}^{s}\rightarrow\mathbb{S}^{n}$, parameter dependent real matrices $\widehat{A} \in \mathbb{R}^{s}\rightarrow \mathbb{R}^{n \times n}$, $\widehat{B} \in \mathbb{R}^{s}\rightarrow \mathbb{R}^{n \times n_y}$, $\widehat{C} \in \mathbb{R}^{s}\rightarrow \mathbb{R}^{n_u \times n}$, $\widehat{D} \in \mathbb{R}^{s}\rightarrow \mathbb{R}^{n_u \times n_y}$, and a positve scalars $\gamma$, and $\boldsymbol{\epsilon}$ such that the LMI (\ref{eq:robustLMIClosedloop}) is feasible.
\begin{figure*}[!t]
\begin{equation}
\left[\begin{array}{cccccccc}
\!\dot{\mathbf{X}} \!+\! \mathbf{X} \mathbf{A} + \widehat{B}\mathbf{C}_2 \!+\! (\star)   & \star & \star & \star & \star & \star & \star & \star \\[.05cm]
\widehat{A}^{\text{T}} + \mathbf{A} + \mathbf{B}_2 \widehat{D} \mathbf{C}_2 & -\dot{\mathbf{Y}} + \mathbf{A}\mathbf{Y} + \mathbf{B}_2 \widehat{C} + (\star) & \star & \star & \star & \star & \star & \star \\[.05cm]
(\mathbf{X} \mathbf{B}_1 + \widehat{B} \mathbf{D}_{21})^{\text{T}} & (\mathbf{B}_1 + \mathbf{B}_2 \widehat{D} \mathbf{D}_{21})^{\text{T}} &  -\gamma \mathbf{I} & \star & \star & \star & \star & \star \\[.05cm]
\mathbf{C}_1 + \mathbf{D}_{12} \widehat{D} \mathbf{C}_2 & \mathbf{C}_1 \mathbf{Y} + \mathbf{D}_{12} \widehat{C} & \mathbf{D}_{11} + \mathbf{D}_{12} \widehat{D} \mathbf{D}_{21} &  -\gamma \mathbf{I} & \star & \star & \star & \star \\[.05cm]
\mathbf{H}^\text{T} \mathbf{X} & \mathbf{H}^\text{T} & \mathbf{0} & \mathbf{0} & -\boldsymbol{\epsilon} \mathbf{I} & \star & \star & \star \\[.05cm]
\boldsymbol{\epsilon} \mathbf{E}_1 & \boldsymbol{\epsilon} \mathbf{E}_1 \mathbf{Y} & \mathbf{0}& \mathbf{0} & \mathbf{0} & -\boldsymbol{\epsilon} \mathbf{I} & \star & \star \\[.05cm]
\mathbf{0} & \mathbf{H}^\text{T} & \mathbf{0} & \mathbf{0} & \mathbf{0} & \mathbf{0} & -\boldsymbol{\epsilon} \mathbf{I} & \star \\[.05cm]
\boldsymbol{\epsilon} \mathbf{E}_2 \widehat{D} \mathbf{C}_2 & \boldsymbol{\epsilon} \mathbf{E}_2 \widehat{C} & \boldsymbol{\epsilon} \mathbf{E}_2 \widehat{D} \mathbf{D}_{21}  & \mathbf{0} & \mathbf{0} & \mathbf{0} & \mathbf{0} & -\boldsymbol{\epsilon} \mathbf{I}
\end{array}\right] \prec \mathbf{0}
\label{eq:robustLMIClosedloop}
\end{equation}
\end{figure*}

\end{theorem}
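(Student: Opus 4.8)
The plan is to reduce Theorem~\ref{thm:thm2} to Theorem~\ref{thm:thm1} by writing the parametric uncertainty as a norm-bounded perturbation, eliminating it with the standard scaling (Petersen / $\mathcal{S}$-procedure) inequality, and recognising the resulting slack-variable LMI as exactly (\ref{eq:robustLMIClosedloop}). The decisive feature I would exploit is that the synthesis variables $\mathbf{X},\mathbf{Y},\widehat{A},\widehat{B},\widehat{C},\widehat{D}$ are \emph{common} to all admissible $\boldsymbol{\Delta}(t)$, so a single feasible point of (\ref{eq:robustLMIClosedloop}) certifies one controller valid over the whole uncertainty set.

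First I would substitute $\mathbf{A}_{\Delta}=\mathbf{A}+\mathbf{H}\boldsymbol{\Delta}(t)\mathbf{E}_1$ and $\mathbf{B}_{2,\Delta}=\mathbf{B}_2+\mathbf{H}\boldsymbol{\Delta}(t)\mathbf{E}_2$ from (\ref{uncertainties}) into every occurrence of $\mathbf{A}$ and $\mathbf{B}_2$ in the left-hand side of (\ref{eq:LMI1}), and collect the $\boldsymbol{\Delta}(t)$-dependent contributions. This yields $\mathbf{M}_0+\mathbf{He}[\mathbf{U}_a\boldsymbol{\Delta}(t)\mathbf{V}_a]+\mathbf{He}[\mathbf{U}_b\boldsymbol{\Delta}(t)\mathbf{V}_b]\prec\mathbf{0}$, where $\mathbf{M}_0$ is the nominal block matrix of (\ref{eq:LMI1}); the $\boldsymbol{\Delta}\mathbf{A}$ channel has left factor $\mathbf{U}_a=[\,(\mathbf{X}\mathbf{H})^{\text{T}}\;\;\mathbf{H}^{\text{T}}\;\;\mathbf{0}\;\;\mathbf{0}\,]^{\text{T}}$ and right factor $\mathbf{V}_a=[\,\mathbf{E}_1\;\;\mathbf{E}_1\mathbf{Y}\;\;\mathbf{0}\;\;\mathbf{0}\,]$, while the $\boldsymbol{\Delta}\mathbf{B}_2$ channel has $\mathbf{U}_b=[\,\mathbf{0}\;\;\mathbf{H}^{\text{T}}\;\;\mathbf{0}\;\;\mathbf{0}\,]^{\text{T}}$ and $\mathbf{V}_b=[\,\mathbf{E}_2\widehat{D}\mathbf{C}_2\;\;\mathbf{E}_2\widehat{C}\;\;\mathbf{E}_2\widehat{D}\mathbf{D}_{21}\;\;\mathbf{0}\,]$. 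The split into two channels reflects that $\mathbf{A}$ enters (\ref{eq:LMI1}) through both the $\mathbf{X}\mathbf{A}$ block (left factor $\mathbf{X}\mathbf{H}$) and the $\mathbf{A},\mathbf{A}\mathbf{Y}$ blocks (left factor $\mathbf{H}$), whereas $\mathbf{B}_2$ enters only the second block row.

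Next I would remove $\boldsymbol{\Delta}(t)$ using the completion-of-squares bound $\mathbf{He}[\mathbf{U}\boldsymbol{\Delta}\mathbf{V}]\preceq\boldsymbol{\epsilon}^{-1}\mathbf{U}\mathbf{U}^{\text{T}}+\boldsymbol{\epsilon}\,\mathbf{V}^{\text{T}}\mathbf{V}$, which holds for every $\boldsymbol{\Delta}$ satisfying (\ref{Delta1}) and any $\boldsymbol{\epsilon}>0$; this is precisely where the constraint $\boldsymbol{\Delta}^{\text{T}}\boldsymbol{\Delta}\preceq\mathbf{I}$ is invoked. Bounding both channels with a single scalar $\boldsymbol{\epsilon}$ shows that $\mathbf{M}_0+\boldsymbol{\epsilon}^{-1}(\mathbf{U}_a\mathbf{U}_a^{\text{T}}+\mathbf{U}_b\mathbf{U}_b^{\text{T}})+\boldsymbol{\epsilon}(\mathbf{V}_a^{\text{T}}\mathbf{V}_a+\mathbf{V}_b^{\text{T}}\mathbf{V}_b)\prec\mathbf{0}$ is sufficient for the perturbed inequality to hold for all admissible $\boldsymbol{\Delta}(t)$. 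Applying the Schur complement twice per channel then externalises the $\boldsymbol{\epsilon}^{-1}\mathbf{U}\mathbf{U}^{\text{T}}$ and $\boldsymbol{\epsilon}\mathbf{V}^{\text{T}}\mathbf{V}$ terms as the four extra block rows and columns with $-\boldsymbol{\epsilon}\mathbf{I}$ on the diagonal, reproducing (\ref{eq:robustLMIClosedloop}): the rows carrying $(\mathbf{H}^{\text{T}}\mathbf{X},\mathbf{H}^{\text{T}})$ and $(\boldsymbol{\epsilon}\mathbf{E}_1,\boldsymbol{\epsilon}\mathbf{E}_1\mathbf{Y})$ are the $\boldsymbol{\Delta}\mathbf{A}$ channel, and those carrying $(\mathbf{0},\mathbf{H}^{\text{T}})$ and $(\boldsymbol{\epsilon}\mathbf{E}_2\widehat{D}\mathbf{C}_2,\boldsymbol{\epsilon}\mathbf{E}_2\widehat{C},\boldsymbol{\epsilon}\mathbf{E}_2\widehat{D}\mathbf{D}_{21})$ are the $\boldsymbol{\Delta}\mathbf{B}_2$ channel. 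Feasibility of (\ref{eq:robustLMIClosedloop}) therefore implies the perturbed form of (\ref{eq:LMI1}) for every $\boldsymbol{\Delta}(t)$; since the coupling condition (\ref{eq:LMI2}) is retained and the synthesis variables are uncertainty-independent, the controller reconstructed as in Theorem~\ref{thm:thm1} certifies closed-loop PDQ stability and $\Vert\mathbf{z}\Vert_2\le\gamma\Vert\mathbf{w}\Vert_2$ uniformly over the uncertainty set.

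I expect the main obstacle to be the algebraic bookkeeping in the first step, namely verifying that $\mathbf{He}[\mathbf{U}_a\boldsymbol{\Delta}\mathbf{V}_a]+\mathbf{He}[\mathbf{U}_b\boldsymbol{\Delta}\mathbf{V}_b]$ reproduces the substituted terms of (\ref{eq:LMI1}) faithfully. Because $\mathbf{A}$ and $\mathbf{B}_2$ couple through different left-factor patterns, a naive single-factor ansatz generates spurious off-diagonal cross terms, so the safest route is to carry out the substitution at the closed-loop level of (\ref{eq:closed-loop system}), where the uncertainty acts through the \emph{uniform} single channel $\widetilde{\mathbf{H}}=[\,\mathbf{H}^{\text{T}}\;\;\mathbf{0}\,]^{\text{T}}$, establish the robust Bounded Real Lemma there, and only afterwards apply the linearising change of variables and congruence transformation that splits the channel into the two $\boldsymbol{\epsilon}$-blocks of (\ref{eq:robustLMIClosedloop}). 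A secondary point to confirm is that sharing one scalar $\boldsymbol{\epsilon}$ between the two channels does not sacrifice feasibility, which follows because the scaling bound holds channelwise for any positive $\boldsymbol{\epsilon}$.
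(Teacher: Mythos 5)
Your proposal is correct and follows essentially the same route as the paper's own proof: you substitute $\mathbf{A}_{\Delta}$ and $\mathbf{B}_{2,\Delta}$ into (\ref{eq:LMI1}), factor the perturbation into precisely the two $\mathbf{He}\left[\mathbf{U}\boldsymbol{\Delta}(t)\mathbf{V}\right]$ channels the paper uses (with identical $\mathbf{U}_a$, $\mathbf{V}_a$, $\mathbf{U}_b$, $\mathbf{V}_b$), eliminate $\boldsymbol{\Delta}(t)$ via the scaling inequality (\ref{eq:ineq}) with a single $\boldsymbol{\epsilon}$, and apply the Schur complement to arrive at (\ref{eq:robustLMIClosedloop}). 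Your closing caveat about verifying the two-channel factorization against the exact closed-loop perturbation is a subtlety the paper's proof passes over silently, so your version is, if anything, slightly more careful than the published one.
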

\begin{proof} 
By substituting the matrices with additive norm-bounded uncertainties in the LMI condition (\ref{eq:LMI1}) given by Theorem \ref{thm:thm1}, \textit{i.e.}, $\mathbf{A}_{\Delta}(\boldsymbol{\rho}(t))$  for $\mathbf{A}$ and $\mathbf{B}_{2, \Delta}(\boldsymbol{\rho}(t))$ for $\mathbf{B}_2$ in (\ref{eq:LMI1}), the new LMI condition will be as follows
\begin{equation}
\begin{array}{l}
(\ref{eq:LMI1}) + 
\mathbf{He}\Bigg(
\left[\begin{array}{c}
\mathbf{X}\mathbf{H}\\
\mathbf{H}\\
\mathbf{0}\\
\mathbf{0}\end{array}\right]
\boldsymbol{\Delta}(t)
\left[\begin{array}{cccc}
\mathbf{E}_1 & \mathbf{E}_1\mathbf{Y} & \mathbf{0} & \mathbf{0}
\end{array}\right]\Bigg)\\[.8cm]
\!+\mathbf{He}\Bigg(\!
\left[\!\!\begin{array}{c}
\mathbf{0}\\
\mathbf{H}\\
\mathbf{0}\\
\mathbf{0}\end{array}\!\!\right]
\!\boldsymbol{\Delta}(t)\!
\left[\!\!\begin{array}{cccc}
\mathbf{E}_2 \widehat{D}\mathbf{C}_2 & \!\mathbf{E}_2\widehat{C} & \!\mathbf{E}_2 \widehat{D} \mathbf{D}_{21} & \mathbf{0}
\end{array}\!\!\right]\!\!\Bigg)
\!\!\prec\!\mathbf{0}.
\end{array}
\end{equation}
\noindent Finally, using the following inequality \cite{xie1996output} 
\begin{equation}
\boldsymbol{\Theta} \boldsymbol{\Delta} (t) \boldsymbol{\Phi} + \boldsymbol{\Phi}^{\text{T}} \boldsymbol{\Delta}^{\text{T}} (t) \boldsymbol{\Theta}^{\text{T}} \leq \boldsymbol{\epsilon}^{-1} \boldsymbol{\Theta} \boldsymbol{\Theta}^{\text{T}} + \boldsymbol{\epsilon} \boldsymbol{\Phi} ^{\text{T}} \boldsymbol{\Phi},
\label{eq:ineq}
\end{equation}
\noindent which holds for all scalars $\boldsymbol{\epsilon}>0$ and all constant matrices $\boldsymbol{\Theta}$ and $\boldsymbol{\Phi}$ of appropriate dimensions, and using the Schur complement \cite{boyd1994linear}, the final LMI condition (\ref{eq:robustLMIClosedloop}) is obtained. 
\end{proof}
% \end{theorem}
Once the parameter-dependent LMI decision matrices, $\mathbf{X}$, $\mathbf{Y}$, $\widehat{A}$, $\widehat{B}$, $\widehat{C}$, and $\widehat{D}$  satisfying the LMI conditions (\ref{eq:LMI1}) and (\ref{eq:LMI2}) are obtained, the output-feedback LPV controller matrices can be readily computed following the steps below:

1. Determine $\mathbf{M}$ and $\mathbf{N}$ from the factorization problem
\begin{equation}
\mathbf{I} - \mathbf{X}\mathbf{Y} = \mathbf{N} \mathbf{M}^{\text{T}},
\end{equation} 
\noindent where the obtained $\mathbf{M}$ and $\mathbf{N}$ matrices are square and invertible in the case of a full-order controller.
% \noindent 2. Compute the following parameter matrices:
% \begin{equation}
% \begin{array}{cl}
% \widehat{A}&= \mathbf{X} \mathbf{A} \mathbf{Y} + \mathbf{X} \mathbf{B}_2 \mathbf{D}_k \mathbf{C}_2 \mathbf{Y} + \mathbf{N} \mathbf{B}_k \mathbf{C}_2 \mathbf{Y}\\
% & + \mathbf{X} \mathbf{B}_2 \mathbf{C}_k \mathbf{M}^{\text{T}} + \mathbf{N} \mathbf{A}_k \mathbf{M}^{\text{T}},\\[0.2cm]
% \widehat{A}_d&= \mathbf{X} \mathbf{A}_d \mathbf{Y} + \mathbf{X} \mathbf{B}_2 \mathbf{D}_k \mathbf{C}_{2d} \mathbf{Y} + \mathbf{N} \mathbf{B}_k \mathbf{C}_{2d} \mathbf{Y}\\
% & + \mathbf{X} \mathbf{B}_2 \mathbf{C}_{dk} \mathbf{M}^{\text{T}} + \mathbf{N} \mathbf{A}_{dk} \mathbf{M}^{\text{T}},\\[0.2cm]
% \widehat{B}&= \mathbf{X} \mathbf{B}_2 \mathbf{D}_k + \mathbf{N} \mathbf{B}_k,\\
% \widehat{C}&= \mathbf{D}_k \mathbf{C}_2 \mathbf{Y} + \mathbf{C}_k \mathbf{M}^{\text{T}},\\
% \widehat{C}_d&= \mathbf{D}_k \mathbf{C}_{2d} \mathbf{Y} + \mathbf{C}_{dk} \mathbf{M}^{\text{T}}.
% \end{array}
% \end{equation}

2. Compute the controller matrices in the following order:
\begin{align}
  \mathbf{D}_K&=\widehat{D},\nonumber\\
\mathbf{C}_K&= (\widehat{C} - \mathbf{D}_K \mathbf{C}_{2} \mathbf{Y}) \mathbf{M} ^{-\text{T}},\nonumber\\
\mathbf{B}_K&= \mathbf{N}^{-1} (\widehat{B} -  \mathbf{X} \mathbf{B}_2 \mathbf{D}_K),\nonumber\\
\mathbf{A}_{K}&= -\mathbf{N}^{-1} (\mathbf{X} \mathbf{A} \mathbf{Y} + \mathbf{X} \mathbf{B}_2 \mathbf{D}_K \mathbf{C}_{2} \mathbf{Y} + \mathbf{N} \mathbf{B}_K \mathbf{C}_{2} \mathbf{Y}\nonumber\\
&+ \mathbf{X} \mathbf{B}_2 \mathbf{C}_{K} \mathbf{M} ^{\text{T}} - \widehat{A}) \mathbf{M} ^{-\text{T}}.
\label{eq:controllermatrices}  
\end{align}
\begin{remark}
\label{remark1}
Theorem \ref{thm:thm1} results in an infinite-dimensional convex optimization problem with an infinite number of LMIs and decision variables since the scheduling parameter vector belongs to a continuous real vector space, $\boldsymbol{\rho} \in \mathscr{F}^\nu _\mathscr{P}$. To address this obstacle, the gridding method of the parameter space is utilized to convert the infinite-dimensional problem to a finite-dimensional convex optimization problem \cite{apkarian1998advanced}. In this regard, we choose the matrix parameter functional dependence as $\mathbf{M}(\boldsymbol{\rho}(t))=\mathbf{M}_0 + \sum\limits_{i=1}^{s}\rho_i(t) \mathbf{M}_{i_1}$, where $\mathbf{M}(\boldsymbol{\rho}(t))$ represents any of the parameter-dependent matrices appearing in the LMI conditions (\ref{eq:LMI1}), and (\ref{eq:LMI2}). Subsequently, by gridding the scheduling parameter space at appropriate intervals we obtain a finite set of LMIs to be solved for the unknown matrices and $\gamma$. The MATLAB\textsuperscript{\tiny\textregistered} toolbox YALMIP can be used to solve the introduced optimization problem \cite{lofberg2004yalmip}. Also,  it should be noted that due to the presence of derivatives of the parameter-dependent matrices in the LMI condition  (\ref{eq:LMI1}), \textit{i.e.}  $\dot{\mathbf{X}}$, and $\dot{\mathbf{Y}}$, the parameter variation rate $\dot{\rho}$, enters affinely in the LMIs, and it is sufficient to check the LMI only at the vertices of the $\dot{\rho}$ parameter range. 
\end{remark}
%In the next section, the problem of controlling the AFR in SI engines with TWC will be introduced as a numerical example to assess the performance of the output-feedback controller synthesis conditions presented in this section.
% more notes need to be added here regarding the application and so on.
% \noindent where $(1,1)=\mathbf{A}^{\text{T}}\mathbf{P}+ \mathbf{P}\mathbf{A}+ \dfrac{\partial \mathbf{P}}{\partial \rho_i}\dot{\rho}_i  + \mathbf{Q}-\mathbf{R}$, and $\boldsymbol{\Xi}_{22}=- \big(1-   \dfrac{\partial \tau}{\partial \rho_i} \dot{\rho}_i \big)  \mathbf{Q}-\mathbf{R}$ and and the dependence on the scheduling parameter has been dropped for the clarity and simplicity. It should be noted that the parameter variation rate $\dot{\rho}$, enters affinely in the LMIs in this paper, hence, it is enough to check LMI only at the vertices of $\dot{\rho}$. As a result, $\dfrac{\partial \mathbf{P}}{\partial \rho_i} \dot{\rho}_i $ and $\dfrac{\partial \tau}{\partial \rho_i} \dot{\rho}_i $ are replaced by $\sum_{i=1}^s \pm \bigg(\nu_i \dfrac{\partial \mathbf{P}}{\partial \rho_i}\bigg)$  and $\sum_{i=1}^s \pm \bigg(\nu_i \dfrac{\partial \tau}{\partial \rho_i} \bigg)$, respectively.
\section{SPMSM LPV control design}
\label{sec:results}
 We examine the application of the proposed LPV gain-scheduled control design method to the SPMSM speed regulation. The SPMSM error dynamics (\ref{eq:lpverrordynamics}) is formulated in an  LPV framework (shown in Section \textcolor{red}{III}) which is suitable for the proposed LPV control design synthesis.  Considering the generic LPV system state-space realization (\ref{LPVsystem}) for the SPMSM model described in (\ref{eq:lpverrordynamics}), the LPV state-space matrices of the SPMSM are as shown in (\ref{eq:matrices1}). Moreover, the vector of the target outputs to be controlled is defined as follows
\begin{equation}
\begin{array}{l}
\!\!\mathbf{z}^{\!\text{T}}(t) = \left[\!\!\begin{array}{ccc}
\phi \cdot e_z (t)&
\sigma \cdot e_\omega (t) &
\xi \cdot e_\alpha (t)\end{array}\right.\\[.1cm]
\qquad\qquad\quad\quad\left.\begin{array}{ccc}
\psi \cdot e_\beta (t)&
\eta \cdot u_\alpha (t)&
\mu \cdot u_\beta (t)
\end{array}\!\!\right].
\end{array}
\end{equation}
% \begin{equation}
% \mathbf{z}(t) = \begin{bmatrix}
% \phi \cdot e_z (t)\\
% \sigma \cdot e_\omega (t) \\
% \xi \cdot e_\alpha (t)\\
% \psi \cdot e_\beta (t)\\
% \eta \cdot u_\alpha(t)\\ 
% \mu \cdot u_\beta (t)
% \end{bmatrix} = \mathbf{C}_{1} \mathbf{e}(t) + \mathbf{D}_{11} \mathbf{w}(t) + \mathbf{D}_{12} \mathbf{u}(t),
% \end{equation}
% \noindent with 
% \begin{equation*}
%     \mathbf{C}_{1}=\begin{bmatrix}
% \phi & 0 & 0 & 0\\
% 0 & \sigma & 0 & 0\\
% 0 & 0 & \xi & 0\\
% 0 & 0 & 0 & \psi\\
% 0 & 0 & 0 & 0\\
% 0 & 0 & 0 & 0
% \end{bmatrix}, \mathbf{D}_{12}=\begin{bmatrix}
% 0 & 0\\
% 0 & 0\\
% 0 & 0\\
% 0 & 0\\
% \eta & 0\\
% 0 & \mu
% \end{bmatrix},
% \end{equation*}
\noindent The velocity tracking error which is included in the second state $x_2(t) = e_w(t)$ is penalized by the weighting scalar $\sigma$ and the control efforts $u_\alpha(t)$, $u_\beta(t)$ are penalized by the weighting scalars $\eta$ and $\mu$, respectively. The choice of the weighting scalars $\phi$, $\sigma$, $\xi$, $\psi$, $\eta$, and $\mu$ determine the relative weighting in the optimization scheme and depends on the desired performance objectives that the designer seeks to achieve \cite{lee2015h}.  Now, based on the definition of the desired controlled vector $\mathbf{z}(t)$, the output-feedback controller is designed for the SPMSM to minimize the induced $\mathcal{L}_2$ gain (or $\mathcal{H}_{\infty}$ norm) (\ref{eq:Performance Index}) of the closed-loop LPV system (\ref{eq:closed-loop system}). The design objective is to guarantee closed-loop stability and minimize the worst case disturbance amplification over the entire range of model parameter variations. 
% \begin{figure}[t]
% \includegraphics[width=\columnwidth, height=1.5in]{Blockdiagram.png}   % \includegraphics[width=0.7\columnwidth, height=2.8in]{Psi2.jpg}
% \caption{Block diagram of the SPMSM control system}
% \label{fig:1}
% \end{figure}

In order to demonstrate the improved performance of the proposed control with respect to the desired velocity profile tracking and load torque disturbance rejection, closed-loop simulations are performed in the MATLAB/Simulink environment. 
% The proposed SPMSM control system with the output feedback controller is constructed as seen in Fig. \ref{fig:1}, and 
The model parameters of the SPMSM are listed in Table \ref{table: MotorParameters}. For comparison purposes, we evaluate the closed-loop tracking performance of the proposed controller against the FOC method with fixed gains. The FOC tuned PI controller transfer functions are selected as follows \cite{kim2017electric}:

\begin{table}[]
\small\sf\centering
\caption{Parameters of the SPMSM.\label{table: MotorParameters}}
\begin{tabular}{lll}
\toprule
Parameter&Value&Unit\\
\midrule
Number of pole pairs ($p$)   & 4      & -         \\
Stator resistance ($R_{s}$)  & 0.2    & $\Omega$\ \\
Stator inductance ($L_{s}$)  & 0.4    & mH        \\
Magnetic flux linkage ($\lambda_{pm}$) & 16.3   & mWb    \\
Moment of inertia ($J_{m}$)   & 3.24 $\times$ 10$^{-5}$ & kg $\cdot$m$^2$    \\
Coefficient of friction ($B$)    & 0.004   & N$\cdot$ m $\cdot$ s/rad    \\
\bottomrule
\end{tabular}
\end{table}

\begin{equation}
    \begin{split}
      G_{cs} (s)& = 0.533 + \dfrac{61.4}{s}, \\
G_{cc} (s)& = 1.38 + \dfrac{691}{s},    
    \end{split}
\end{equation}

\noindent where $G_{cs} (s)$ and $G_{cc} (s)$  indicate the speed controller in the $q$ axis and the current controllers in the $d$ and $q$ axis respectively. The gains of these controllers are obtained based on the nominal parameters of the motor and the desired bandwidth of the controllers.

In order to evaluate the closed-loop tracking performance of the proposed LPV method, we consider a desired velocity reference profile, as shown in Figure \ref{fig:Reference_step}. Figures \ref{fig:trackingstep1} and \ref{fig:trackingstep2} present the magnified plots of the tracking error result of the LPV and PI controllers in the absence of any disturbances, for the first step change, when the velocity reference accelerates from 0 r/min to 300 r/min, and for the second step change, when the velocity reference decelerates from 300 r/min to 100 r/min, respectively. As anticipated, the proposed LPV controller outperforms the PI controller with respect to the overshoot/undershoot, rise time, and speed of the response in both acceleration and deceleration intervals due to its scheduling structure. Figures \ref{fig:nodistcurrent} and \ref{fig:nodistvoltage} show the currents and control input voltages of the proposed LPV controller, both in the $\alpha - \beta$ axis.

\begin{figure}[t]
\centering
\includegraphics[width=\columnwidth, height=2.1in]{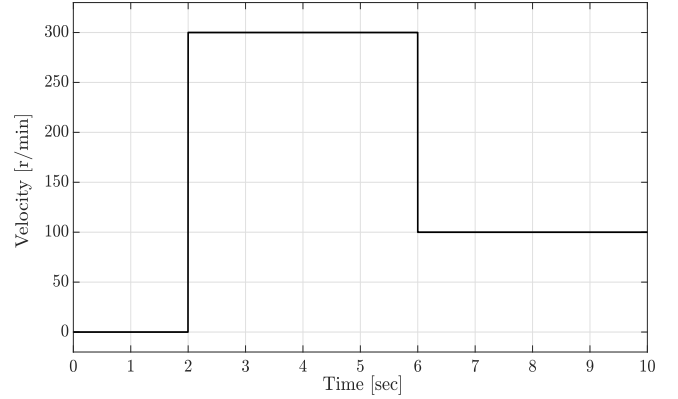}   % \includegraphics[width=0.7\columnwidth, height=2.8in]{Psi2.jpg}
\caption{Desired velocity reference profile.}
\label{fig:Reference_step}
\end{figure}
% \vspace{-5mm}
\begin{figure}[t]
\centering
\includegraphics[width=\columnwidth, height=2.1in]{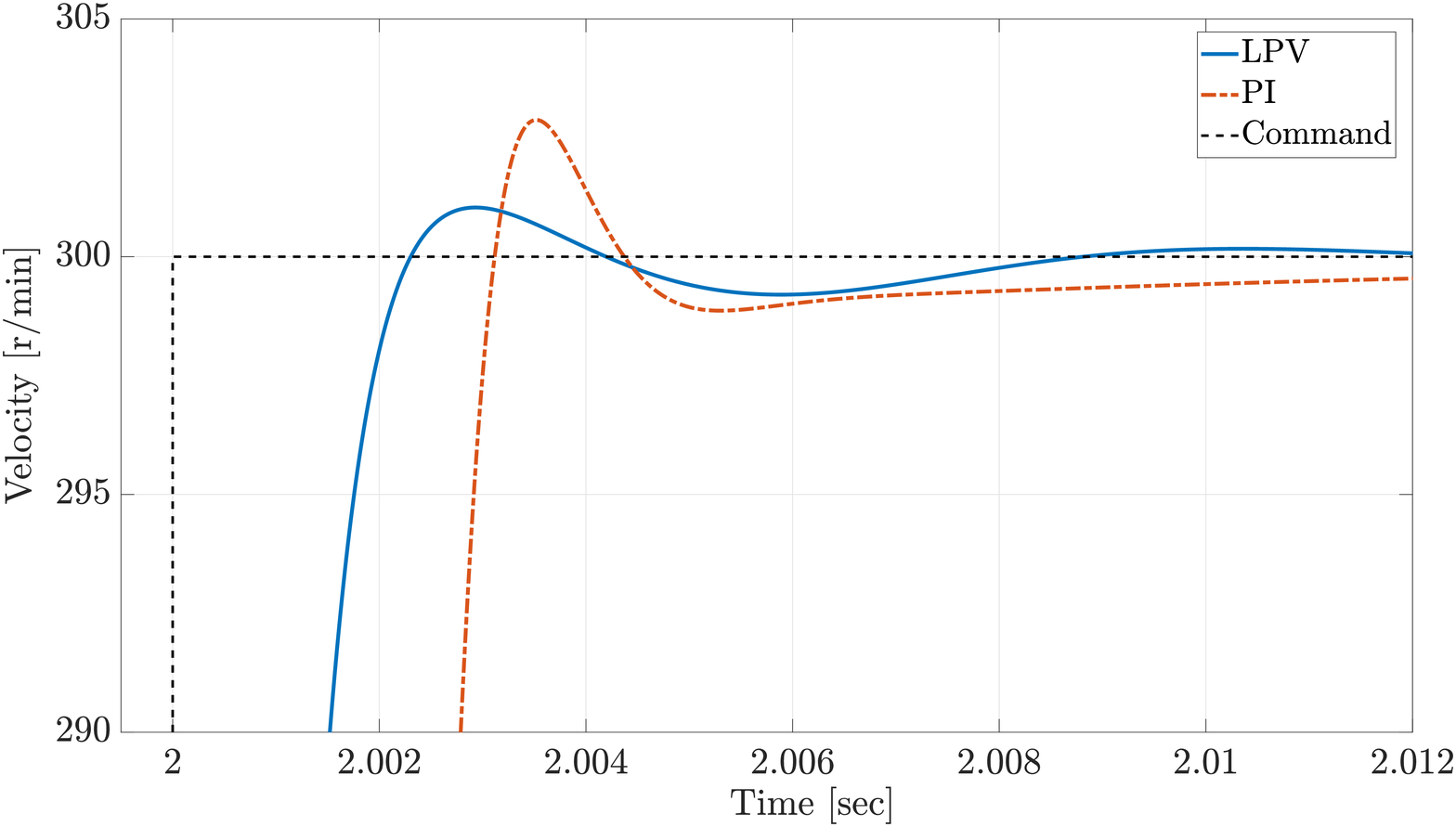}   % \includegraphics[width=0.7\columnwidth, height=2.8in]{Psi2.jpg}
\caption{Closed-loop velocity tracking performance of the LPV controller and the fixed structure PI controller with no disturbance during acceleration period.}
\label{fig:trackingstep1}
\end{figure}
% \vspace{-5mm}
\begin{figure}[t]
\centering
\includegraphics[width=\columnwidth, height=2.2in]{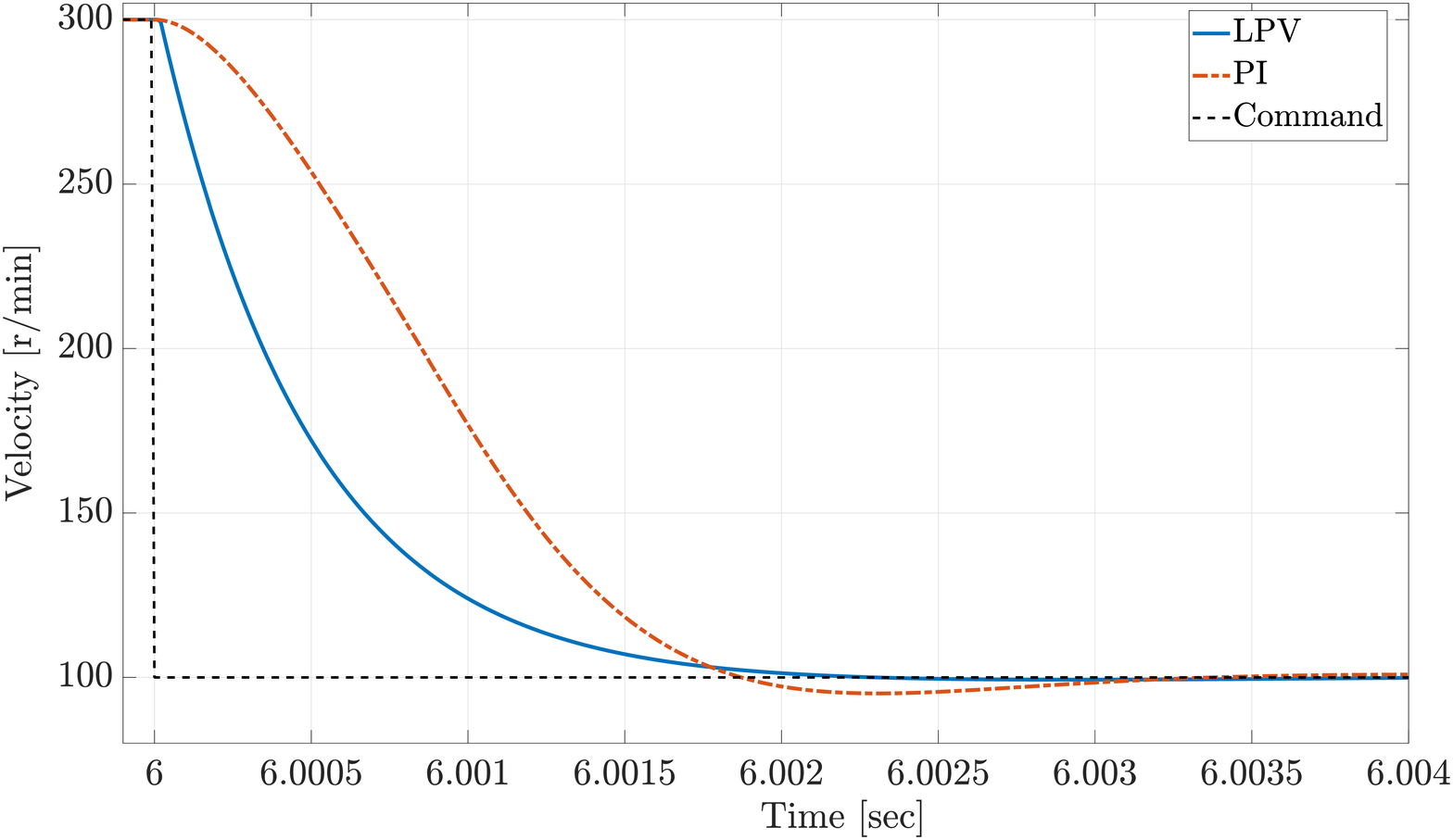}   % \includegraphics[width=0.7\columnwidth, height=2.8in]{Psi2.jpg}
\caption{Closed-loop velocity tracking performance of the LPV controller and the fixed structure PI controller with no disturbance during deceleration period.}
\label{fig:trackingstep2}
\end{figure}
% \vspace{-1mm}

\begin{figure}[t]
\centering
\includegraphics[width=\columnwidth, height=2.2in]{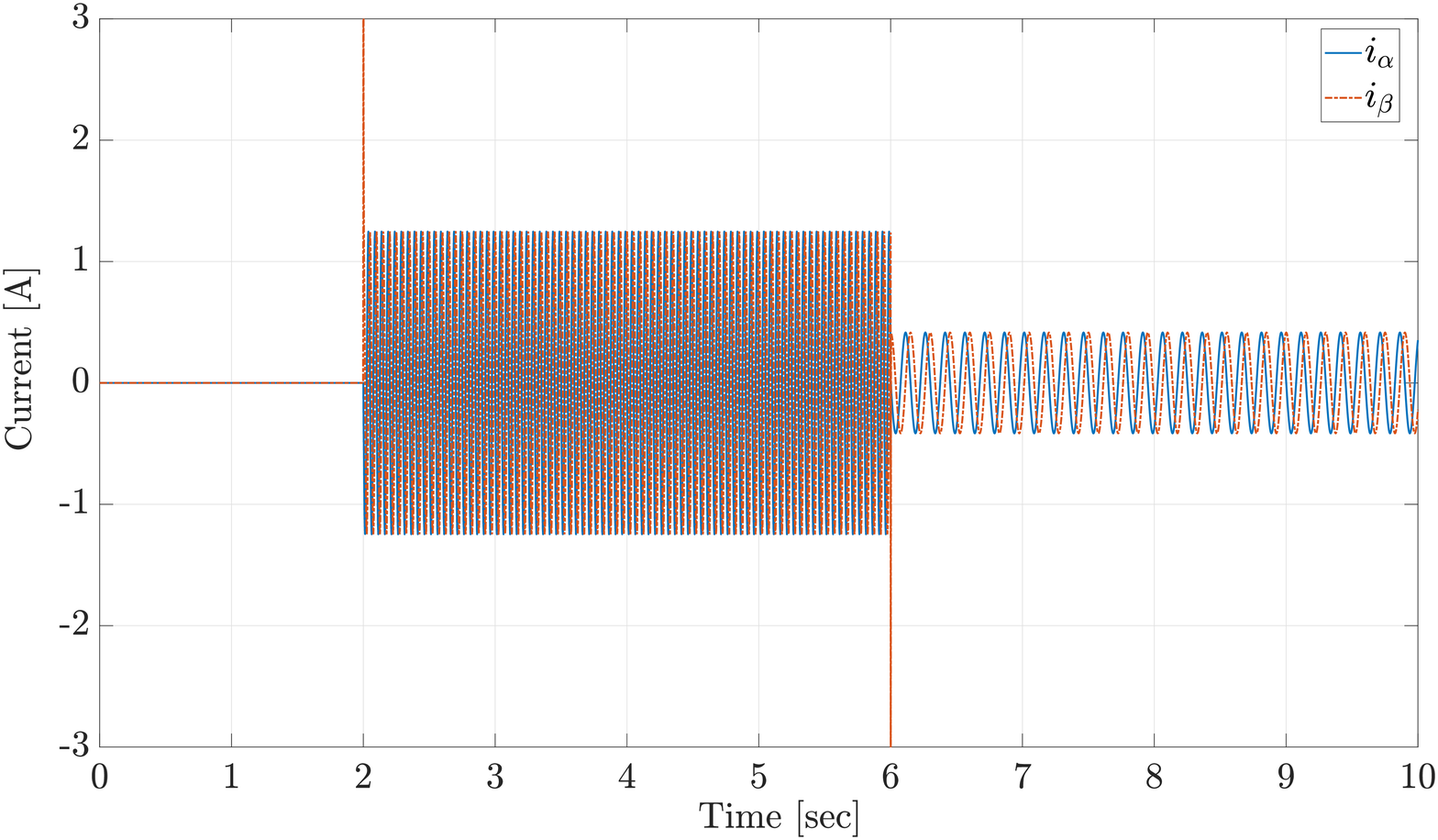}   % \includegraphics[width=0.7\columnwidth, height=2.8in]{Psi2.jpg}
\caption{Currents of the LPV controller in the $\alpha - \beta$ axis.}
\label{fig:nodistcurrent}
\end{figure}
\begin{figure}[t]
\centering
\includegraphics[width=\columnwidth, height=2.2in]{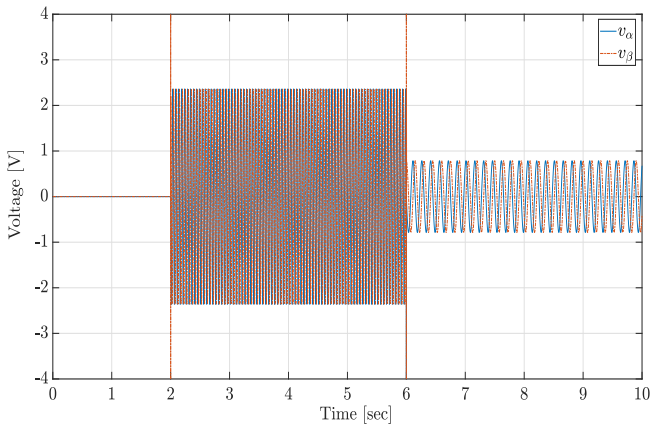}   % \includegraphics[width=0.7\columnwidth, height=2.8in]{Psi2.jpg}
\caption{Control input voltages of the LPV controller in the $\alpha - \beta$ axis.}
\label{fig:nodistvoltage}
\end{figure}
Next, we assume that the SPMSM is experiencing temperature variation with a temperature profile shown in Figure \ref{fig:temp} and an output disturbance. The temperature variation affects the model's resistance and magnet flux as described in (\ref{9}). The disturbance under consideration is a constant torque load disturbance as shown in Figure \ref{fig:dist}. The closed-loop performance of the proposed LPV controller and the PI controller in tracking a given ramp-type velocity reference command with a step disturbance is shown in Figure \ref{fig:trackingdist}. Additionally, Figures \ref{fig:distcurrent} and \ref{fig:distvoltage} demonstrate the currents and the input voltages of the LPV controller in the $\alpha - \beta$ phases, respectively.
In order to evaluate the robustness of the proposed design,  the closed-loop response of the proposed robust LPV gain-scheduling controller is investigated in the presence of model parameter variations. To this end, we select the stator inductance $L_s$  and the moment of inertia $J_m$ to be under-estimated by $50 \%$, and the stator resistance $R_s$ and viscous friction coefficient $B$ to be over-estimated by $50 \%$, which corresponds to a worst-case perturbation scenario. The closed-loop velocity tracking performance of the system with the proposed robust LPV control design (obtained through condition (\ref{eq:robustLMIClosedloop}) and Theorem \ref{thm:thm2}) is compared to the response of the LPV controller designed without considering uncertainty obtained using the results of Theorem \ref{thm:thm1}. As per Figure \ref{fig:trackingdist_robust}, the control without considering uncertainty in the design demonstrates significant oscillatory behavior, higher overshoots and settling time, which are undesirable. Hence, as the results demonstrate, the proposed robust LPV control design is capable of compensating for parameter uncertainties and modeling mismatches. Therefore, by investigating the presented results, we conclude that the proposed LPV control method demonstrates superior results in terms of velocity tracking, disturbance rejection and robustness under different simulated scenarios in the presence of parameter variations, disturbances and model uncertainty.
\begin{figure}[t]
\centering
\includegraphics[width=\columnwidth, height=2.2in]{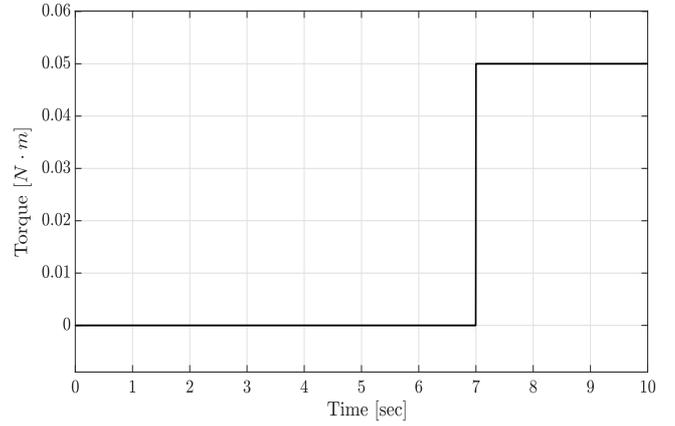}   % \includegraphics[width=0.7\columnwidth, height=2.8in]{Psi2.jpg}
\caption{Load torque disturbance.}
\label{fig:dist}
\end{figure}
\begin{figure}[t]
\centering
\includegraphics[width=\columnwidth, height=2.2in]{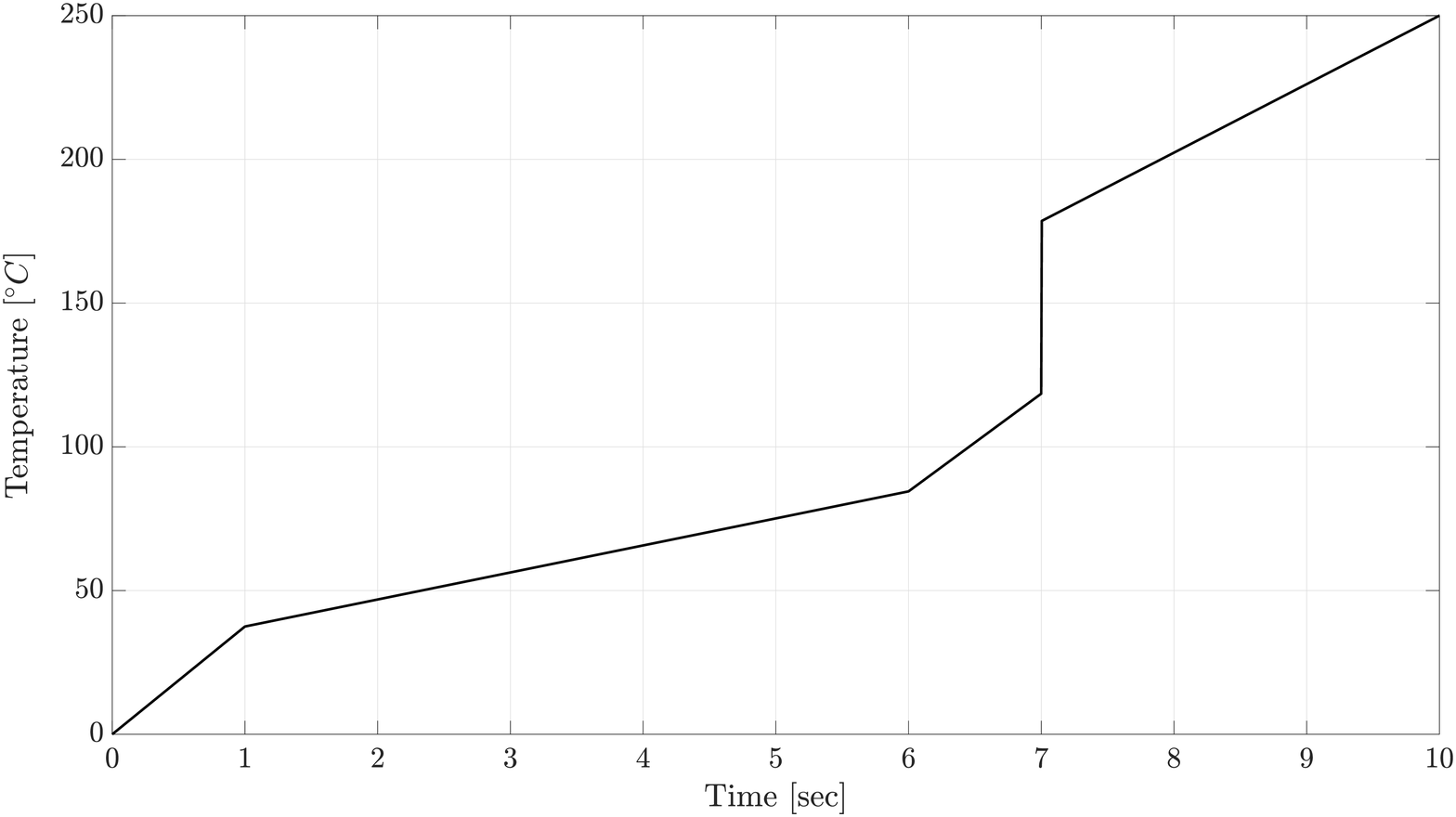}   % \includegraphics[width=0.7\columnwidth, height=2.8in]{Psi2.jpg}
\caption{SPMSM operating temperature variation.}
\label{fig:temp}
\end{figure}

\section{Conclusion}
\label{sec:conclusion}
In the present paper, a linear parameter-varying (LPV) gain-scheduled output feedback controller has been proposed for the speed control of the surface permanent magnet synchronous motors (SPMSMs). The dynamic model of the motor has been developed in the $\alpha - \beta$ stationary reference frame, and an LPV model representation has been utilized to capture the nonlinear SPMSM dynamics. The effect of temperature on the variability of SPMSMs model parameters is taken into account in the model. The linear matrix inequality (LMI) framework has been used to formulate the controller synthesis conditions as  numerically tractable convex optimization computational problem. Subsequently, the proposed controller was designed to guarantee the closed-loop stability and  minimize the disturbance amplification in terms of the induced $\mathcal{L}_2$-norm performance specification of the closed-loop system. The effectiveness of the proposed controller was validated via comparisons with a conventional PI controller in the MATLAB/Simulink environment. The results demonstrated the effectiveness and superiority of the proposed approach in improving the transient performances in terms of settling time, overshoot, disturbance attenuation, and  parametric uncertainty compensation. Future research would focus on designing a disturbance observer to empower the control design to cope better with unknown disturbances. Hence, the estimated disturbance can be compensated in the controller output to improve the stability of the motor and speed tracking performance.

\begin{figure}[t]
\centering
\includegraphics[width=\columnwidth, height=2.4in]{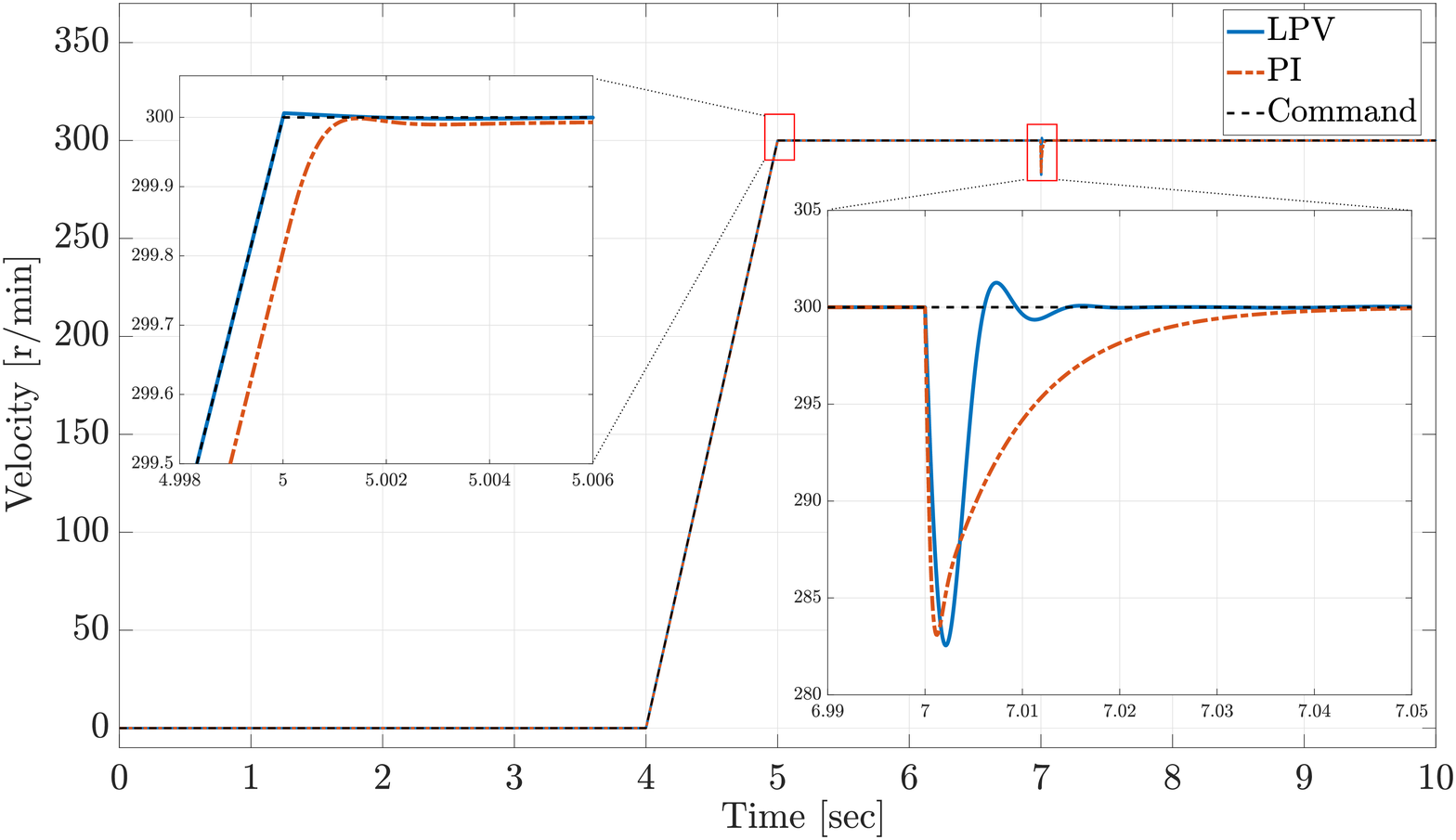}   % \includegraphics[width=0.7\columnwidth, height=2.8in]{Psi2.jpg}
\caption{Closed-loop velocity tracking performance of the LPV controller and the fixed structure PI controller subject to load torque disturbance.}
\label{fig:trackingdist}
\end{figure}
\begin{figure}[t]
\centering
\includegraphics[width=\columnwidth,height=2.3in]{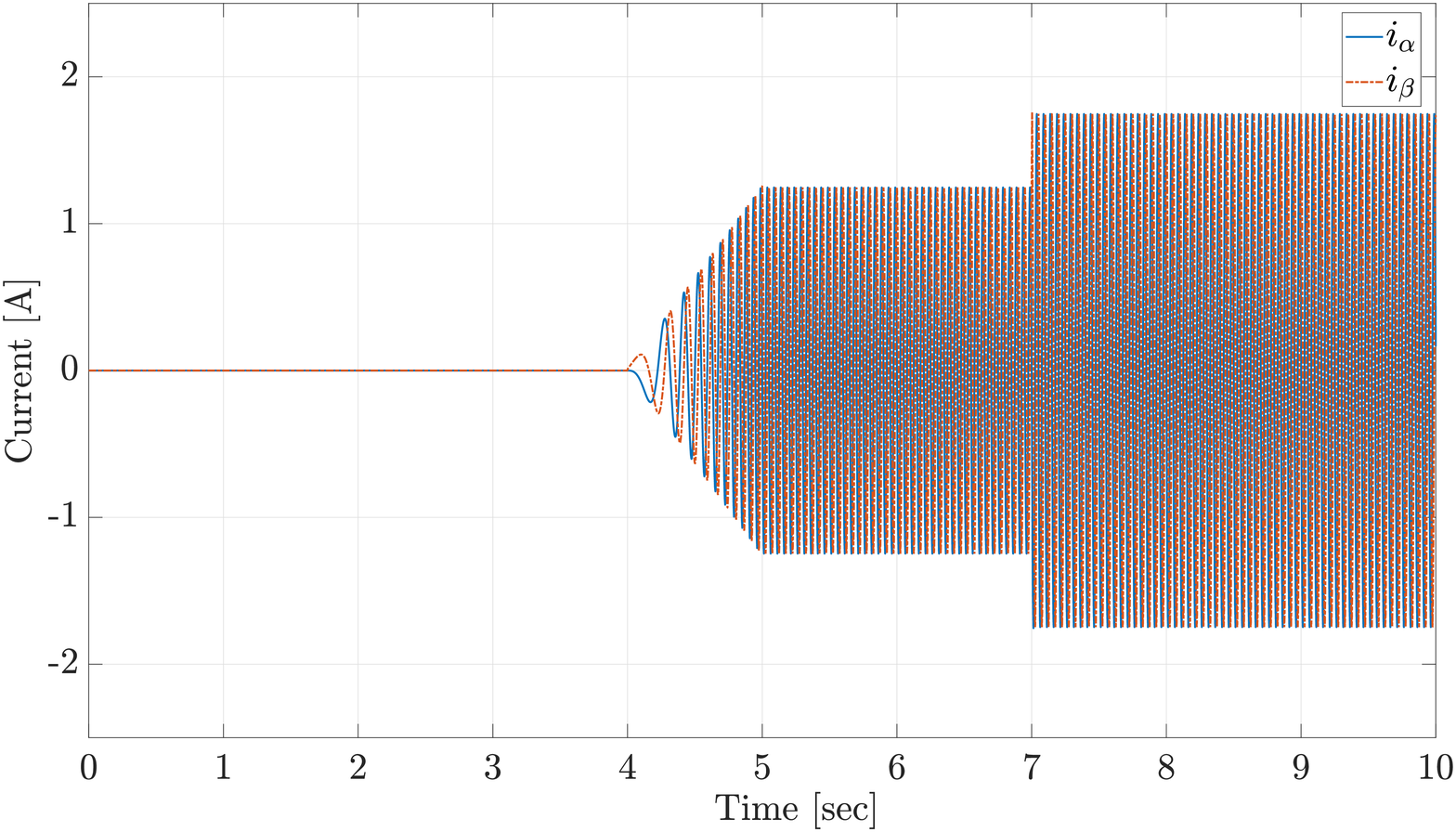}   % \includegraphics[width=0.7\columnwidth, height=2.8in]{Psi2.jpg}
\caption{Currents of the LPV controller in the $\alpha - \beta$ axis.}
\label{fig:distcurrent}
\end{figure}
\begin{figure}[t]
\centering
\includegraphics[width=\columnwidth,height=2.3in]{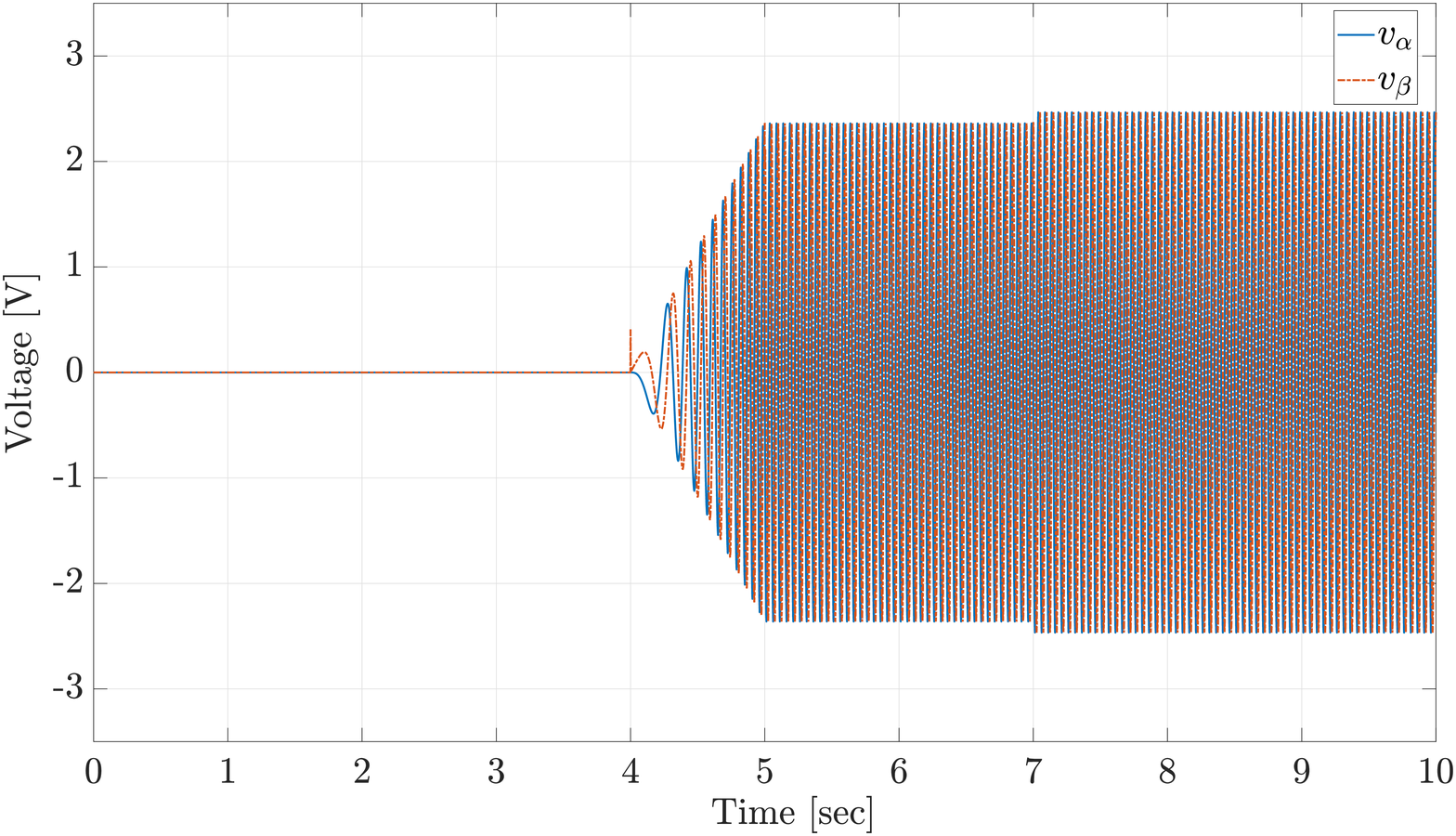}   % \includegraphics[width=0.7\columnwidth, height=2.8in]{Psi2.jpg}
\caption{Control input voltages of the LPV controller in the $\alpha - \beta$ axis.}
\label{fig:distvoltage}
\end{figure}
\begin{figure}[t]
\centering
\includegraphics[width=\columnwidth, height=2.4in]{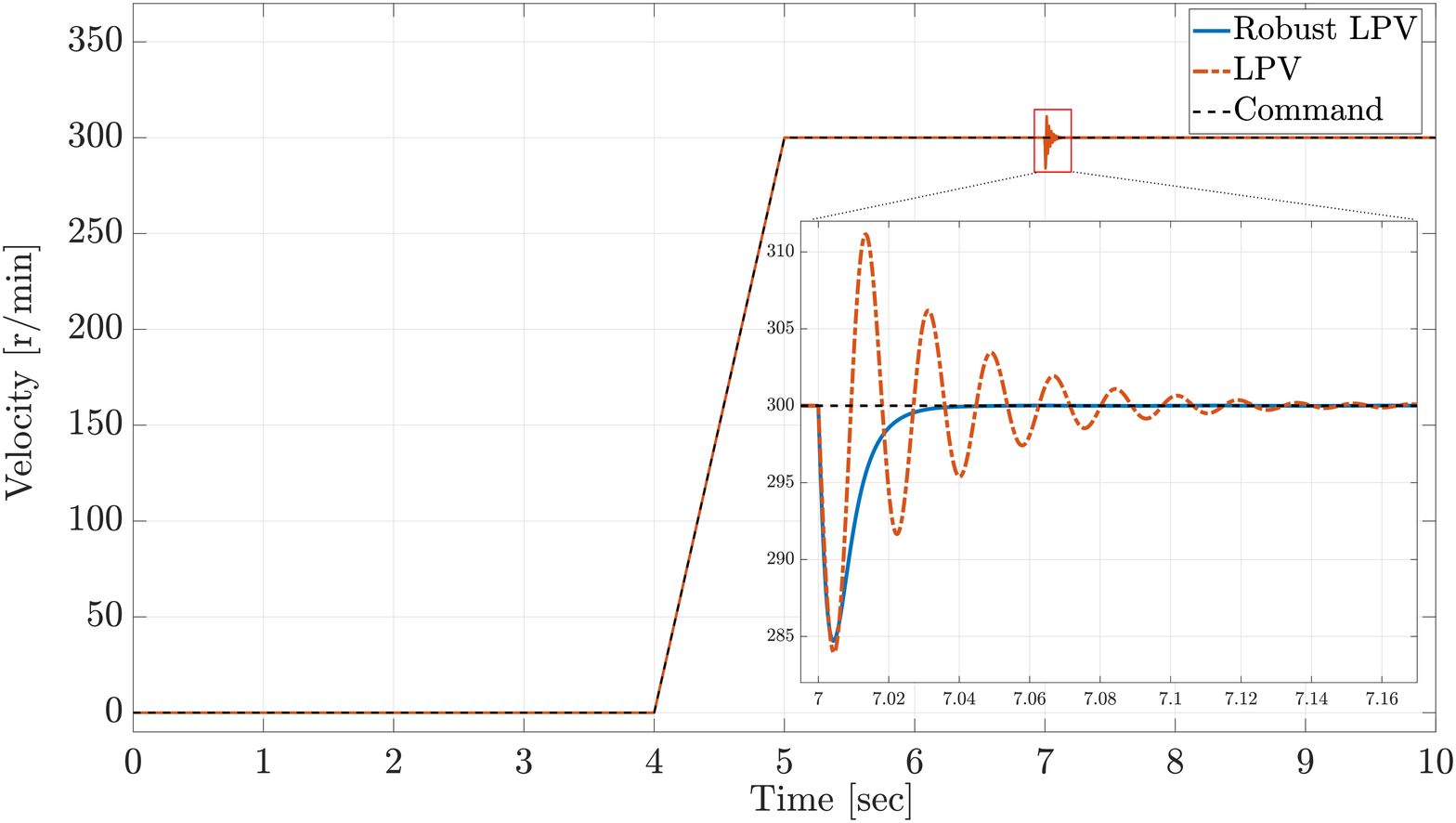}   % \includegraphics[width=0.7\columnwidth, height=2.8in]{Psi2.jpg}
\caption{Closed-loop velocity tracking performance of the robust LPV controller in the presence of model parameter uncertainty subject to torque load disturbance.}
\label{fig:trackingdist_robust}
\end{figure}

%\bibliographystyle{SageH.bst}
%\bibliography{ref}
%\end{document}

\end{document}